\newcommand{\HRule}{\rule{\linewidth}{0.2mm}}
\def\accd{\mathsf{nsb}}
\def\acce{\mathsf{nsb_e}}
\def\prec{\mathsf{prec}}
\def\ufp{\mathsf{ufp}}
\def\ulp{\mathsf{ulp}}
\def\ufpe{\mathsf{ufp_e}}
\def\ulpe{\mathsf{ulp_e}}
\def\rr{\mathbb R}
\def\const{\mathsf{Const}}
\def\lw{\varrho}
\def\Min{\operatorname*{Min}}
\def\st{\operatorname*{s.t.}}
\def\nn{\mathbb N}
\def\x{\mathbf x}
\newtheorem{lem}{Lemma}
 \newcommand\ForAuthors[1]
 \title{Fast and Efficient Bit-Level Precision Tuning}
 \author{}
 \institute{}
 \author{Assal\'e Adj\'e\inst{1} \and Dorra Ben Khalifa\inst{1}\and  Matthieu Martel\inst{1,2}}
\authorrunning{Ben Khalifa et al.}\institute{University of Perpignan, LAMPS laboratory, 52 Av. P. Alduy, Perpignan, France \and Numalis, Cap Omega, Rond-point Benjamin Franklin, Montpellier, France
 \email{\{assale.adje,matthieu.martel,dorra.ben-khalifa\}@univ-perp.fr}}
\begin{document}
\maketitle
\begin{abstract}
In this article, we introduce a new technique for precision tuning. This problem
consists of finding the least data types for numerical values such that the result
of the computation satisfies some accuracy requirement. State of the art techniques
for precision tuning use a try and fail approach. They change the data types of some variables
of the program and evaluate the accuracy of the result. Depending on what is obtained,
they change more or less data types and repeat the process. Our technique is radically different.
Based on semantic equations, we generate an Integer Linear Problem (ILP) from the program source
code. Basically, this is done by reasoning on the most significant bit and the number of
significant bits of the values which are integer quantities. The integer solution to this problem, computed in polynomial time by a (real) linear programming solver, gives the optimal data types at the bit level.
A finer set of semantic equations is also proposed which does not reduce directly
to an ILP problem. So we use policy iteration to find the solution.
Both techniques have been implemented and we show that our results encompass the results of
state of the art tools.

\smallskip
 \keywords{Static analysis, computer arithmetic,
integer linear problems, numerical accuracy, policy iteration.}
\end{abstract}

\section{Introduction}\label{sec1}

Let us consider a program $P$ computing some numerical result $R$, typically but not necessarily in the IEEE754
floating-point arithmetic \cite{IEEE754}. Precision tuning then consists of finding the smallest
data types for all the variables and expressions of $P$ such that the result $R$ has some
desired accuracy. These last years, much attention has been paid to this problem \cite{CBBSGR17,DHS18,GR18,KSWLB19,LHSL13,RGNNDKSBIH13}. Indeed, precision tuning makes it possible to save memory and, by way of consequence, it has a positive impact on the footprint of programs concerning energy consumption, bandwidth usage, computation time, etc.

A common point to all the techniques cited previously is that they follow a try and fail strategy.
Roughly speaking, one chooses a subset $S$ of the variables of $P$, assigns to them  smaller data
types (e.g. \texttt{binary32} instead of \texttt{binary64} \cite{IEEE754}) and evaluates the
accuracy of the tuned program $P'$. If the accuracy of the result returned by $P'$ is satisfying
then new variables are included in $S$ or even smaller data types are assigned to certain
variables already in $S$ (e.g. \texttt{binary16}). Otherwise, if the accuracy of
the result of $P'$ is not satisfying, then some variables are removed from $S$. This process is
applied repeatedly, until a stable state is found. Existing techniques differ in their way to
evaluate the accuracy of programs, done by dynamic analysis
\cite{GR18,KSWLB19,LHSL13,RGNNDKSBIH13} or by static  analysis \cite{CBBSGR17,DHS18} of $P$ and $P'$.
They may also differ in the algorithm used to define $S$, delta debugging being
the most widespread method \cite{RGNNDKSBIH13}.

Anyway all these techniques suffer from the same
combinatorial limitation: If $P$ has $n$ variables and if the method tries $k$ different data types
then the search space contains $k^n$ configurations. They scale neither in the number $n$ of
 variables (even if heuristics such as delta debugging \cite{RGNNDKSBIH13} or branch and bound \cite{CBBSGR17} reduce the search space at the price of
optimality) or in the number $k$ of data types which can be tried. In particular, bit level
precision tuning, which consists of finding the minimal number of bits needed for each variable
to reach the desired accuracy, independently of a limited number $k$ of data types, is not an option.

So the method introduced in this article for precision tuning of programs is radically different.
Here, no try and fail method is employed. Instead, the accuracy of the arithmetic expressions assigned
to variables is determined by semantic equations, in function of the accuracy of the operands. By
reasoning on the number of significant bits of the variables of $P$ and knowing the weight of
their most significant bit thanks to a range analysis performed before the tuning phase, we
are able to reduce the problem to an Integer Linear Problem (ILP) which can be optimally solved
in one shot by a classical linear programming solver (no iteration). Concerning the number $n$ of variables, the
method scales up to the solver limitations and the solutions are naturally found at the bit level,
making the parameter $k$ irrelevant. An important point is that the optimal solution to the continuous linear programming relaxation of our ILP is a vector of integers,
as demonstrated in Section \ref{sec52}. By consequence, we may use a linear solver among real numbers
whose complexity is polynomial~\cite{schrijver1998theory} (contrarily to the linear solvers among integers whose complexity is NP-complete~\cite{papadimitriou1981complexity}).
This makes our precision tuning method solvable in polynomial-time, contrarily to the existing exponential methods.

Next, we go one step further by introducing a second set of semantic equations.
These new equations make it possible to tune even more the precision by being less pessimistic
on the propagation of carries in arithmetic operations. However the problem do not reduce
any longer to an ILP problem ($\min$ and $\max$ operators are needed). Then we use
policy iteration (PI) \cite{CGGMP05} to find efficiently the solution.

Both methods have been implemented inside a tool for precision tuning named XXX\footnote{In this article, XXX hides the actual name of our tool and missing references refer to our previous work for anonymity.} \cite{KMA19}. Formerly, XXX was expressing the precision tuning problem as a set of first order logical
 propositions among relations between linear integer expressions. An SMT solver (Z3 in practice
 \cite{MB08}) was used repeatedly to find the existence of a solution with a certain weight
 expressing the number of significant bits ($\accd$) of the variables \cite{KMA19}. In the present article, we compare
experimentally our new methods to the SMT based method
previously used by XXX \cite{KMA19} and to the Precimonious tool \cite{GR18,RGNNDKSBIH13}. These experiments on programs coming from mathematical libraries or other applicative domains such as
IoT \cite{KM19} show that the technique introduced in this article for precision tuning clearly
encompasses the state of the art techniques.

The rest of this article is organized as follows. In the next section, we provide a motivating example. We then present in Section \ref{sec4} some essential background on the functions needed for the constraint generation and also we detail the set of constraints for both ILP and PI methods. Section \ref{sec5} presents the proofs of correctness. We end up in Section \ref{sec6} by showing that our new techniques exhibits very good results in practice before concluding in Section \ref{sec7}. 
\section{Motivating Example}\label{sec2}

A motivating example to better explain our method is given by the code snippet of Figure \ref{pendulum}. In this example, we aim at modeling the movement of a simple pendulum without damping. Let $l = 0.5 \ m$ be the length of this pendulum, $m = 1 \ kg$ its mass and $g=9.81\ m \cdot s^{-2}$ Newton's gravitational constant. We denote by $\theta $ the tilt angle in radians as shown in Figure \ref{pendulum} (initially $\theta = \frac{\pi}{4}$). The Equation describing the movement of the pendulum is given in Equation (\ref{pend1}).

 \lstset{language=Java, numbers=left, showspaces=false,
   showstringspaces=false, tabsize=2, breaklines=true,
   xleftmargin=5.0ex,
    numberstyle=\scriptsize,numbersep=0pt
}
\begin{figure}[tb]
\hrule
\vspace{0.2cm}
\scriptsize
\centering
\begin{tabular}{lcl}\tt
\begin{lstlisting}[mathescape]
 g = 9.81; l = 0.5;
 y1 = 0.785398; y2 = 0.785398;
 h = 0.1; t = 0.0;
 while (t<10.0) {
  y1new = y1 + y2 * h ;
  aux1 = sin(y1) ;
  aux2 = aux1 * h * g / l;
  y2new = y2 - aux2;
  t = t + h;
  y1 = y1new; y2 = y2new;
 };
 require_nsb(y2,20);
  \end{lstlisting}
&~\hspace{0.4cm}~&
\begin{minipage}{4cm}
~\\ \vspace{2.0cm} ~  \includegraphics[width=4.0cm]{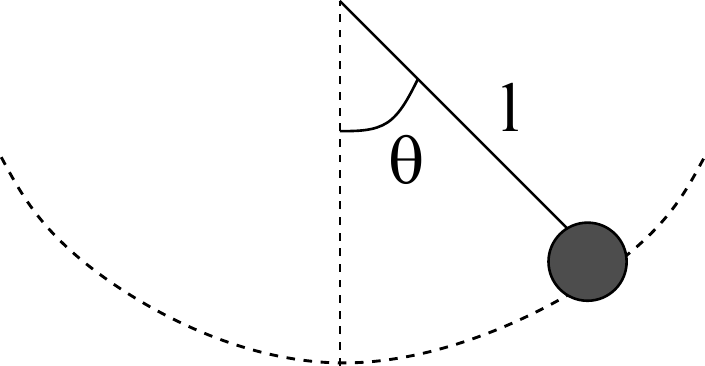}\label{tilt} \end{minipage}\\
&&\\
\tt
  \begin{lstlisting}[mathescape]
 g$^{\ell_1}$ = 9.81$^{\ell_0}$; l$^{\ell_3}$ = 0.5$^{\ell_2}$;
 y1$^{\ell_5}$ = 0.785398$^{\ell_4}$;
 y2$^{\ell_7}$ = 0.785398$^{\ell_6}$;
 h$^{\ell_9}$ = 0.1$^{\ell_8}$; t$^{\ell_{11}}$ = 0.0$^{\ell_{10}}$;
 while (t$^{\ell_{13}}$ <$^{\ell_{15}}$ 10.0$^{\ell_{14}}$)$^{\ell_{59}}$ {
  y1new$^{\ell_{24}}$ = y1$^{\ell_{17}}$ +$^{\ell_{23}}$ y2$^{\ell_{19}}$ *$^{\ell_{22}}$ h$^{\ell_{21}}$;
  aux1$^{\ell_{28}}$ = sin(y1$^{\ell_{26}}$)$^{\ell_{27}}$;
  aux2$^{\ell_{40}}$ = aux1$^{\ell_{30}}$ *$^{\ell_{39}}$ h$^{\ell_{32}}$
   *$^{\ell_{38}}$ g$^{\ell_{34}}$ /$^{\ell_{37}}$ l$^{\ell_{36}}$;
  y2new$^{\ell_{46}}$ = y2$^{\ell_{42}}$ -$^{\ell_{45}}$ aux2$^{\ell_{44}}$;
  t$^{\ell_{52}}$ = t$^{\ell_{48}}$ +$^{\ell_{51}}$ h$^{\ell_{50}}$;
  y1$^{\ell_{55}}$ = y1new$^{\ell_{54}}$;
  y2$^{\ell_{58}}$ = y2new$^{\ell_{57}}$;
 };
 require_nsb(y2,20)$^{\ell_{61}}$;
  \end{lstlisting}
&~\hspace{0.4cm}~&
\tt
  \begin{lstlisting}[mathescape]
 g|20| = 9.81|20|; l|20| = 1.5|20|;
 y1|29| = 0.785398|29|;
 y2|21| = 0.0|21|;
 h|21| = 0.1|21|; t|21| = 0.0|21|;
 while (t<1.0) {
   y1new|20| = y1|21| +|20| y2|21|
     *|22| h|21|;
   aux1|20| = sin(y1|29|)|20|;
   aux2|20| = aux1|19| *|20| h|18|
     *|19| g|17| /|18|l|17|;
   y2new|20| = y2|21| -|20| aux2|18|;
   t|20| = t|21| +|20| h|17|;
   y1|20| = y1new|20|;
   y2|20| = y2new|20|;
 };
 require_nsb(y2,20);
  \end{lstlisting}
\end{tabular}
\vspace{0.2cm}
\hrule
\caption{\label{pendulum} Top left: source program. Top right: pendulum movement for $\theta = \frac{\pi}{4}$. Bottom left: program annotated with labels. Bottom right: program with inferred accuracies.}
\end{figure}

\begin{equation}\label{pend1}\small
m\cdot l\cdot \frac{d^2 \theta}{dt^2}= - m \cdot g\cdot \sin \theta
\end{equation}

Equation (\ref{pend1}) being a second order differential equation, we need to transform it into a system of two first order differential equations for resolution so we obtain $y_1 =\theta$ and  $y_2=\frac{d\theta}{dt}$. By applying Euler's method to these last equations, we obtain Equation (\ref{pend3}) implemented in Figure \ref{pendulum}.
\begin{equation}\label{pend3}\small
\frac{dy_1}{dt}  =  y_2\quad \text{and} \quad
\frac{dy_2}{dt}  =  -\frac{g}{l}\cdot \sin y_1
\end{equation}

The key point of our technique is to generate a set of constraints for each statement of our imperative language introduced further in Section \ref{sec4}. For our example, we suppose that all variables, before XXX analysis, are in double precision (source program in the top left corner of Figure \ref{pendulum}) and that a range determination is performed by dynamic analysis on the program variables (we plan to use a static
analyzer in the future). XXX assigns to each node of the program's syntactic tree a unique control point in order to determine easily the number of significant bits of the result as mentioned in the bottom corner of Figure \ref{pendulum}. Some notations can be stressed about the structure of XXX source code. For instance, the annotation \texttt{g}$^{\ell_1} = 9.81^{\ell_0}$ denotes that \texttt{g} has a unique control point $\ell_{1}$. As well, we have the statement \textbf{\texttt{require\_nsb(y2,}}20\texttt{)} which informs the tool that the user wants to get on variable \texttt{y2} only 20 bits (we consider that a result has $n$ significants if the relative error between the exact and approximated results is less than $2^{-n}$). Finally, the minimal precision needed for the inputs and intermediary results satisfying the user assertion is observed on the bottom right corner of Figure~\ref{pendulum}. In this code, if we consider for instance Line $6$, then \texttt{y1new|20|} means that the variable needs $20$ significant bits at this point. Similarly, \texttt{y1} and \texttt{y2} need $21$ bits each and the addition requires $20$ bits.

The originality of our method is that we reduce the precision tuning problem to an ILP. For example, taking again Line $6$ of the pendulum code, we generate six constraints as shown in Equation \ref{ilpcstr} (this is detailed further in Section \ref{sec4}).
  \begin{equation}\label{ilpcstr}
\scriptsize
C_1=
\left\{
\begin{array}{l}
 \accd(\ell_{17}) \ge \accd(\ell_{23})+ (-1)+ \xi(\ell_{23})(\ell_{17}, \ell_{22}) - (-1), \\
 \accd(\ell_{22} \ge \accd(\ell_{23}+0+\xi(\ell_{23})(\ell_{17}, \ell_{22})- (1), \\
 \accd(\ell_{19}) \ge \accd(\ell_{22})+\xi(\ell_{22})(\ell_{19}, \ell_{21})-1, \\
 \accd(\ell_{21}) \ge \accd(\ell_{22})+\xi(\ell_{22})(\ell_{19}, \ell_{21})-1,\\
 \accd(\ell_{23}) \ge\accd(\ell_{24}), \ \xi(\ell_{23}) (\ell_{17}, \ell_{22})\ge1, \ \ \xi(\ell_{22})(\ell_{19}, \ell_{21})\ge1 \\
\end{array}\right\}
\end{equation}

The first two constraints are for the addition. Basically, $\accd(\ell_{23})$ stands for number of significant bits as described in Section \ref{sec31}. It represents the difference between the unit in the first place ($\ufp$, see Section \ref{sec31}) of the result of the sum and the $\ufp$ of its error denoted $\ufpe(\ell_{23})$. We have $\ufp(\ell_{17}) = \ufp(\ell_{17}) -\accd(\ell_{17})$. As mentioned previously, the $\ufp$ are computed by a prior range analysis. Then, at constraint generation time, they are constants. For our example, $\ufp(\ell_{17}) = -1$. This quantity occurs in the first constraints. The next two constraints are for the multiplication. The fourth constraint $\accd(\ell_{23}) \ge \accd(\ell_{24})$ is for the assignment and the last two constraints are for the constant functions $\xi(\ell_{23})(\ell_{17}, \ell_{22})$ and $\xi(\ell_{22})(\ell_{19}, \ell_{21})$ respectively for the addition and multiplication (see next paragraphs and Section \ref{sec4} for more details).
Note that XXX generates such constraints for all the statements of the program.

 For a user requirement of 20 bits on the variable \texttt{y2} as shown in the original program of the top right corner of Figure \ref{pendulum} (all variables are in double precision initially), XXX succeeds in tuning the majority of variables of the pendulum program into simple precision with a total number of bits at bit level equivalent to 274 (originally the program used 689 bits). The new mixed precision formats obtained for line $6$ are \texttt{y1new|20|} \texttt{=} \texttt{y1|21|}\ \texttt{+|20|} \ \texttt{y2}\texttt{|22|}\ $\times$\texttt{|22|}\ \texttt{h|22|}.

Let us now focus on the term $\xi(\ell_{23})(\ell_{17}, \ell_{22})$ (for the addition). In our ILP, we always assume that $\xi$ is a constant function equal to $1$. This corresponds to the carry bit which can be propagated up to the $\ufp$ and $\ufpe$ and increments them by $1$ which is correct but pessimistic. In large codes, this function becomes very costly if we perform several computations
at a time and therefore the errors would be considerable, especially that in many cases adding this carry bit is useless because the operands and their errors do not overlap. It is then crucial to use the most precise function $\xi$.
Unfortunately, when we model this optimization the problem is no more linear ($\min$ and $\max$ operators arise) and we have to use the PI technique \cite{CGGMP05} to solve it (see Section \ref{sec42}).

In this case, by analyzing Line $6$ of our program, we have to add the following new constraints (along with the former ones) as mentioned in Equation (\ref{ipcstr}). In fact, policy iteration makes it possible to break the $\min$ in the $\xi(\ell_{23})(\ell_{17}, \ell_{22})$ function by choosing the $\max$ between $\ufp(\ell_{22}) - \ufp(\ell_{17}) - \accd(\ell_{17}) - \accd(\ell_{22}) - \acce(\ell_{17})$ and $0$, the $\max$ between $\ufp(\ell_{17}) - \ufp(\ell_{22}) + \accd(\ell_{22}) - \accd(\ell_{17}) - \acce(\ell_{22})$ and $0$ and the constant $1$. Next, it becomes possible to solve the corresponding ILP. If no fixed point is reached, XXX iterates until a solution is found as shown in Section \ref{sec42}. By applying this optimization, the new formats of line $6$ are given as
\texttt{y1new|20|} \texttt{=} \texttt{y1|21|}\  \texttt{+|20|}\  \texttt{y2|21|}\  $\times$\texttt{|22|}\  \texttt{h|21|}. By comparing with the formats obtained above, a gain of precision of $1$ bit is observed on variables \texttt{y2} and \texttt{h} (total of 272 bits at bit level for the optimized program). The program on the bottom right corner of Figure~\ref{pendulum} illustrates the new optimized formats obtained by using the policy iteration technique.
  \begin{equation}\label{ipcstr}
\scriptsize
C_2=
\left\{
\begin{array}{l}
 \acce(\ell_{23}) \ge  \acce(\ell_{17}), \\
 \acce(\ell_{23}) \ge \acce(\ell_{22}), \\
 \accd(\ell_{23}) \ge -1-0+ \accd(\ell_{22})- \accd(\ell_{17})+ \acce(\ell_{22})+\xi(\ell_{23}, \ell_{17}, \ell_{22}), \\
 \acce(\ell_{23}) \ge 0-(-1)+ \accd(\ell_{17})- \accd(\ell_{22})+ \acce(\ell_{17})+\xi(\ell_{23}, \ell_{17}, \ell_{22}), \\
 \acce(\ell_{23}) \ge \acce(\ell_{24}),\\
 \acce(\ell_{22}) \ge \accd(\ell_{19}) + \acce(\ell_{19})+ \acce(\ell_{21})-2, \\
 \acce(\ell_{22}) \ge \accd(\ell_{21}) + \acce(\ell_{21})+ \acce(\ell_{19})-2, \\
  \xi(\ell_{23})(\ell_{17}, \ell_{22}) = \min\left(\begin{array}{l}\max\big(0 -6 + \accd(\ell_{17}) - \accd(\ell_{22}) - \acce(\ell_{17}), 0\big),\\
                                                         \max\big(6 - 0 + \accd(\ell_{22}) -\accd(\ell_{17}) - \acce(\ell_{22}), 0\big),1 \end{array}\right) \\
\end{array}\right\}
\end{equation}


\begin{figure}[tb]
\small
\noindent\rule{12.4cm}{0.25mm}
\begin{equation}\label{cst}
\varepsilon(c^{\ell}\texttt{\#p}) \leq  2^{\ufp(c) - \min\big(\texttt{p}, \prec(\ell)\big)} 
\end{equation}
\begin{equation}\label{add}
\varepsilon(c_1^{\ell_1}\texttt{\#$p_1$} +^\ell c_2^{\ell_2}\texttt{\#$p_2$}) \leq \varepsilon({c_1^{\ell_1}\texttt{\#$p_1$}}) +
\varepsilon({c_2^{\ell_2}\texttt{\#$p_2$}}) + {2^{\ufp(c_1 + c_2)- \prec(\ell)}}
\end{equation}
\begin{equation}\label{sub}
\varepsilon(c_1^{\ell_1}\texttt{\#$p_1$} -^\ell c_2^{\ell_2}\texttt{\#$p_2$}) \leq \varepsilon({c_1^{\ell_1}\texttt{\#$p_1$}}) -
\varepsilon({c_2^{\ell_2}\texttt{\#$p_2$}}) + {2^{\ufp(c_1 -  c_2)- \prec(\ell)}}
\end{equation}
\begin{equation}\label{times}
\begin{array}{c}
\varepsilon(c_1^{\ell_1}\texttt{\#$p_1$} \times^\ell c_2^{\ell_2}\texttt{\#$p_2$}) \leq \\ c_1 \cdot \varepsilon({c_2^{\ell_2}\texttt{\#$p_2$}}) +
  c_2 \cdot \varepsilon({c_1^{\ell_1}\texttt{\#$p_1$}}) + \varepsilon({c_1^{\ell_1}\texttt{\#$p_1$}}) \cdot \varepsilon({c_2^{\ell_2}\texttt{\#$p_2$}})
 + {2^{\ufp(c_1 \times c_2)- \prec(\ell)}}
 \end{array}
\end{equation}
\begin{equation}\label{divide}
\varepsilon(c_1^{\ell_1}\texttt{\#$p_1$} \div^\ell c_2^{\ell_2}\texttt{\#$p_2$}) \leq
 \varepsilon(c_1^{\ell_1}\texttt{\#$p_1$} \times^\ell c_2^{\prime\ell_2}\texttt{\#$p_2$})
  \quad \text{with} \quad c^\prime_2 = \frac{1}{c_2}
\end{equation}
\begin{equation} \label{maths}
  \varepsilon\left(\phi(c^{\ell_1}\texttt{\#p})^\ell\right) \leq 2^{\ufp(\phi(c)) - \texttt{p} + \varphi} +  2^{\ufp(\phi(c)) - \prec(\ell)} \ \text{with} \quad \phi \in \{\sin, \cos, \tan, \log, \ldots\}
\end{equation}
\begin{equation}\label{racine}
  \varepsilon\left(\sqrt{(c^{\ell_1}\texttt{\#p})}^\ell\right) \leq 2^{\ufp(\sqrt{c}) - \texttt{p}} + 2^{\ufp(\sqrt{c}) - \prec(\ell)}
\end{equation}
\noindent\rule{12.4cm}{0.25mm}
 \caption{Numerical error on arithmetic expressions.}\label{error}
\end{figure}

\section{Constraints Generation for Bit-Level Precision Tuning}\label{sec4}

In this section, we start by providing essential definitions for understanding the rest of the article. Also, we define a simple imperative language (see Figure~\ref{imp}) from which we generate semantic equations in order to determine the least precision needed for the program numerical values. Then, we will focus on the difference between the two sets of constraints obtained when using the simple ILP and the more complex PI formulations which optimizes the carry bit that can propagate throughout computations. The operational semantics of the language as well as the theorem proving that the solution to the system of constraints gives the desired $\accd$ when running programs are detailed further in Section \ref{sec5}.
\subsection{Elements of Computer Arithmetic}\label{sec31}
Our technique is independent of a particular computer arithmetic. In fact, we manipulate numbers for which we know their unit in the first place ($\ufp$) and the number of significant digits ($\accd$) defined as follows.
\begin{description}
  \item[Unit in the First Place] The unit in the first place of a real number $x$ (possibly encoded up to some rounding mode by a floating-point or a fixpoint number) is given in Equation~(\ref{ufp}). This function, which is independent of the representation of $x$, will be used further in this section to describe the error propagation across the computations.
  \begin{equation}\small
     \label{ufp}
       \ufp(x) = \min \{i \in \mathbb{Z} : 2^{i+1} > x \} = \lfloor  \log_2(x) \rfloor \enspace.
     \end{equation}
  \item[Number of Significant Bits] Intuitively, $\accd(x)$ is the number of significant bits of $x$. Formally, following Parker \cite{parker}, if $\accd(x) = k$, for $x\not=0$ then the error $\varepsilon(x)$ on $x$ is less than $2^{\ufp(x) - k}$. If $x=0$ then $\accd(x) = 0$. For example, if the exact binary value $1.0101$ is approximated by either $x=1.010$ or $x=1.011$ then $\accd(x) = 3$.
\end{description}
In the following, we also use $\ufpe(x)$ and $\acce(x)$ to denote the $\ufp$ and $\accd$ of the error on $x$, i.e. $\ufpe(x) = \ufp(\varepsilon(x))$ and  $\acce(x) = \accd(\varepsilon(x))$.

In this article, we consider a finite precision arithmetic, independently of any particular representation (IEEE754 \cite{IEEE754}, POSIT \cite{CGGMP05}, $\ldots$). Nevertheless, the representation being finite, roundoff errors may arise when representing values or performing elementary operation. These errors are defined in Figure \ref{error}. First of all, by definition, using the function $\ufp$ of Equation (\ref{ufp}), for a number $x$ with $\texttt{p}$ number of significant bits, the roundoff error $\epsilon(x)$ is bounded as shown in Equation (\ref{bound}).
\begin{equation}\label{bound}\small
\varepsilon(x) \le 2^{\ufp(x) - \texttt{p} + 1}
\end{equation}
Let $\prec(\ell)$ be the precision of the operation at control point $\ell$. For example, the precision is $53$ bits for the IEEE754 \texttt{binary64} format. In fact, $\prec(\ell)$ is used to compute the truncation error of the operations. Equations~(\ref{cst}) to (\ref{racine}) of Figure~\ref{error} define the numerical errors of the arithmetic expressions of our language (presented in Section~\ref{sec4}). For constants occurring in the code, the initial precision must be given by the user and we write $c^{\ell}\texttt{\#p}$ a constant $c$ with $\texttt{p}$ significant bits at control point $\ell$. Then, following Equation~(\ref{bound}), $\varepsilon(c^{\ell}\texttt{\#p})$ is defined in Equation~(\ref{cst}) of Figure~\ref{error}: the $\accd$ of the constant is $\min(\texttt{p},\prec(\ell))$ and consequently the error is bounded by $ 2^{\ufp(c) - \min\big(\texttt{\texttt{p}}, \prec(\ell)\big)}$. In equations~(\ref{add}) to (\ref{divide}), we propagate the errors on the operands and we add the roundoff error due to the elementary operation itself. For an elementary function $\phi$, we assume that $\varphi$ bits are lost as shown in Equation~(\ref{maths}) (more details are given in Section \ref{sec4}). The last equation is for the square root function. This function being computable exactly, no more error than for the elementary operation is introduced.

\subsection{Integer Linear Problem Formulation}\label{sec41}

First, we define in Figure \ref{imp} the simple imperative language in which our input programs are written.
  \begin{figure*}[tb]
\scriptsize
\noindent\rule{12.4cm}{0.25mm}
         \begin{center} $\ell \in Lab$    \quad  $x \in Id$ \quad $\odot$ $\in$ \{+, -, $\times$, $\div$\} \quad $math$ $\in$ \{$\sin$, $\cos$, $\tan$, $\arcsin$, $\log$, $\ldots$\}  \end{center}
 \textbf{Expr $\ni$ e} : e ::= c\texttt{\#$p^\ell$} $|$ $x^\ell$ $|$ $e_1^{\ell_1} \odot ^\ell e_2^{\ell_2}$ $|$ $math(e^{\ell_1})^\ell$  $|$ $sqrt(e^{\ell_1})^\ell$ \newline
 \newline
\textbf{Cmd $\ni$ c} : $c ::= c_1^{\ell_1} ; c_2^{\ell_2}$ $|$ $x =^\ell e^{\ell_1}$ $|$  $\textbf{\textit{while}}^\ell \: b^{\ell_0} \: \textbf{\textit{do}}\:  \: c_1^{\ell_1}$ $|$ $\textbf{\textit{if}}^\ell \: b^{\ell_0} \: \textbf{\textit{then}} \:\: c_1^{\ell_1}  \:\textit{\textbf{else}}  \: \: c$ $|$ $\textbf{\textit{require\_nsb}}(x,n)^\ell$
\noindent\rule{12.4cm}{0.25mm}
 \caption{\label{imp} Simple imperative language of constraints.}
\end{figure*}

\begin{figure}[!h]\small
\HRule\vspace{0.2cm}
$$
\mathcal{E}[ c\texttt{\#p}^\ell] \lw= \emptyset \quad (\textsc{Const})
\hspace{1cm}
\mathcal{E}[x^\ell] \lw = \big\{ \accd(\lw(x))\geq \accd(\ell) \big\} \quad (\textsc{Id})
$$
\vspace{0.1cm}
$$
\begin{array}{c}
\mathcal{E}[e_1^{\ell_1}+^{\ell}e_2^{\ell_2}] \lw = \mathcal{E}[e_1^{\ell_1}] \lw \ \cup \ \mathcal{E}[e_2^{\ell_2}] \lw
\\
 \cup\\
 \left\{\accd(\ell_1) \geq \accd(\ell) + \ufp(\ell_1) - \ufp(\ell) +  \xi(\ell)(\ell_1, \ell_2),\right.\\
\left.\hspace{0.1cm}\accd(\ell_2) \geq \accd(\ell) + \ufp(\ell_2) - \ufp(\ell) + \xi(\ell)(\ell_1, \ell_2)
 \right\}
 \end{array}\quad (\textsc{Add})
$$
\vspace{0.1cm}
$$
\begin{array}{c}
\mathcal{E}[e_1^{\ell_1}-^{\ell}e_2^{\ell_2}] \lw = \mathcal{E}[e_1^{\ell_1}] \lw \ \cup \ \mathcal{E}[e_2^{\ell_2}] \lw
\\
 \cup\\
 \left\{\accd(\ell_1) \geq \accd(\ell) + \ufp(\ell_1) - \ufp(\ell) +  \xi(\ell)(\ell_1, \ell_2) ,\right.\\
\left.\hspace{0.1cm}\accd(\ell_2) \geq \accd(\ell) + \ufp(\ell_2) - \ufp(\ell) + \xi(\ell)(\ell_1, \ell_2)
 \right\}
 \end{array} \quad (\textsc{Sub})
$$
\vspace{0.1cm}
$$
\begin{array}{c}
\mathcal{E}[e_1^{\ell_1}\times^{\ell}e_2^{\ell_2}] \lw = \mathcal{E}[e_1^{\ell_1}] \lw \ \cup \ \mathcal{E}[e_2^{\ell_2}] \lw
\\
 \cup\\
 \left\{\accd(\ell_1) \geq \accd(\ell) + \xi(\ell)(\ell_1, \ell_2) - 1 ,\right.
\left. \hspace{0.1cm}\accd(\ell_2) \geq \accd(\ell) + \xi(\ell)(\ell_1, \ell_2) - 1
 \right\}
 \end{array}\quad (\textsc{Mult})
 $$
 \vspace{0.1cm}
$$
\begin{array}{c}
\mathcal{E}[e_1^{\ell_1}\div^{\ell}e_2^{\ell_2}] \lw = \mathcal{E}[e_1^{\ell_1}] \lw \ \cup \ \mathcal{E}[e_2^{\ell_2}] \lw
\\
 \cup\\
 \left\{\accd(\ell_1) \geq  \accd(\ell) + \xi(\ell)(\ell_1, \ell_2) - 1 ,\right.
\left. \hspace{0.1cm}\accd(\ell_2) \geq \accd(\ell) + \xi(\ell)(\ell_1, \ell_2) - 1
 \right\}
 \end{array}\quad (\textsc{Div})
 $$
 $$
\mathcal{E}\left[ \sqrt{e^{\ell_{1}}}^{\ell} \right]  \lw= \mathcal{E}[e_1^{\ell_1}] \lw \   \cup \   \big\{\accd(\ell_1) \geq \accd(\ell) \big\} \quad (\textsc{Sqrt})
$$
 $$
\mathcal{E}\left[ \phi\big({e^{\ell_{1}}} \big)^{\ell}\right]  \lw= \mathcal{E}[e_1^{\ell_1}] \lw \  \cup \ \big\{\accd(\ell_1) \geq \accd(\ell) + \varphi \big\}\ \text{with}\ \phi \in \{\sin, \cos, \tan, \log, \ldots\}
 \quad  (\textsc{Math})
$$
 $$\begin{array}{c}
\mathcal{C}\left[ x \texttt{:=}^\ell e^{\ell_1}\right]  \lw= \big(C, \lw \left[x \mapsto \ell\right] \big) \
 \text{where}\ C =  \mathcal{E}[e_1^{\ell_1}] \lw  \cup \left\{\accd(\ell_1) \geq \accd(\ell) \right\}
\end{array} \quad (\textsc{Assign})
$$

 $$\begin{array}{c}
\mathcal{C}\left[c_1^{\ell_1} \texttt{;} c_2^{\ell_2}  \right]  \lw= \big(C_1 \cup C_2 , \lw_2\big)

\\ \text{where}\ \big(C_1, \lw_1 \big) = \mathcal{C}\left[c_1^{\ell_1}\right]\lw \ \text{and}\ \big(C_2, \lw_2 \big) = \mathcal{C}\left[c_2^{\ell_2}\right]\lw_1
\end{array} \quad (\textsc{Seq})
$$
\begin{equation*}
\begin{array}{c}
\mathcal{C}[\texttt{if}^\ell\ e^{\ell_0}\ \texttt{then}\ c^{\ell_1}\ \texttt{else}\ c^{\ell_2}]\ \lw =
 (C_1\cup C_2\cup C_3,\lw')\ \\ \text{where}\ \left|
\begin{array}{l}
 \forall x\in\text{Id},\ \lw'(x)=\ell,\
(C_1,\lw_1)=\mathcal{C}[c_1^{\ell_1}]\ \lw,\
(C_2,\lw_2)=\mathcal{C}[c_2^{\ell_2}]\ \lw,\\
C_3= \underset{x\in \text{Id}}{\bigcup} \left\{
 \accd(\lw_1(x)) \geq \accd(\ell), \
  \accd(\lw_2(x)) \geq \accd(\ell)
\right\}
\end{array}\right.\end{array}\quad (\textsc{Cond})
\end{equation*}
\begin{equation*}
\begin{array}{c}
\mathcal{C}[\texttt{while}^\ell\ e^{\ell_0}\ \texttt{do}\ c^{\ell_1}]\ \lw =
 (C_1\cup C_2,\lw')\ \\ \text{where}\ \left|
\begin{array}{l}
 \forall x\in\text{Id},\ \lw'(x)=\ell, \
(C_1,\lw_1)=\mathcal{C}[c_1^{\ell_1}]\ \lw'\\
C_2= \underset{x\in \text{Id}}{\bigcup} \left\{
 \accd(\lw(x)) \geq \accd(\ell),\
  \accd(\lw_1(x)) \geq \accd(\ell)
\right\}
\end{array}\right.\end{array}\quad (\textsc{While})
\end{equation*}
$$
\mathcal{C}[\texttt{require\_nsb}(x, \texttt{p})^\ell ] \lw = \big\{\accd(\lw(x)) \geq \texttt{p} \big\} \quad (\textsc{Req})
$$
\center\noindent\rule{10cm}{0.2mm}\\
\vspace{0.2cm}
$\xi(\ell)(\ell_1, \ell_2) = 1$
\vspace{0.3cm}
\HRule
\caption{\label{ilp}ILP constraints with pessimistic carry bit propagation $\xi=1$.}
\end{figure}

 We denote by $Id$ the set of identifiers and by $Lab$ the set of control points of the program as a means to assign to each element $e \in Expr$ and $c \in Cmd$ of our language a unique control point $\ell$ $\in$ $Lab$. First, in $c\texttt{\#p}$, $\texttt{p}$ indicates the number of significant bits of the constant $c$ in the source code. The parameter $\texttt{p}$ is computed by our tool XXX, when solving the constraints.
 Next, the statement \texttt{require\_nsb}(x,n)$^\ell$ indicates the number of significant bits $n$ that a variable $x$ must have at
a control point $\ell$. The rest of the grammar is standard.

As we have mentioned, we are able to reduce the problem of determining the lowest precision on variables
 and intermediary values in programs to an Integer Linear Problem (ILP) by reasoning on their unit in the first place ($\ufp$) and the number of significant bits.
In addition, we assign to each control point $\ell$ an integer variable $\accd(\ell)$ corresponding to the $\accd$ of the arithmetic expressions. $\accd(\ell)$  is determined by solving the ILP generated by the rules of Figure \ref{ilp}.

 Let us now focus on the rules of Figure \ref{ilp} where  $\lw\ :\ \text{Id}\rightarrow \text{Id}\times \text{Lab}$ is an environment which relates each identifier $x$ to its last assignment $x^\ell$: Assuming that $x\ \texttt{:=}^\ell e^{\ell_1}$ is the last assignment of $x$, the environment $\lw$ maps $x$ to $x^\ell$. Then, $\mathcal{E}[e]\ \lw$ generates the set of constraints for an expression $e \in Expr$ in the environment $\lw$.
  We now formally define these constraints for each element of our language. No constraint is generated for a constant $c\texttt{\#p}$ as mentioned in Rule (\textsc{Const}) of Figure \ref{ilp}. For Rule (\textsc{Id}) of a variable $x^\ell$, we require that the $\accd$ at control point $\ell$ is less than its $\accd$ in the last assignment of $x$ given in $\lw(x)$. For a binary operator $\odot$ $\in$ \{+, -, $\times$, $\div$\}, we first generate the set of constraints $\mathcal{E}[e_1^{\ell_1}] \lw$ and $\mathcal{E}[e_2^{\ell_2}] \lw$ for the operands at control points $\ell_1$ and $\ell_2$. Considering Rule (\textsc{ADD}), the result of the addition of two numbers is stored in control point $\ell$. Recall that a range determination is performed before the accuracy analysis, $\ufp(\ell)$, $\ufp(\ell_1)$ and $\ufp(\ell_2)$ are known at constraint generation time.

   Now, before going further in the explanation of the constraints generation for binary operations, we introduce the function $\xi$ which computes the carry bit that can occur throughout an addition (similar reasoning will be done for the other elementary operations).
  In the present ILP of Figure~\ref{ilp}, we over-approximate the function $\xi$ by $\xi(\ell)(\ell_1,\ell_2)=1$ for all $\ell,\ \ell_1$ and $\ell_2$, thus assuming the worst case, i.e. a carry bit is  added at each operation. We will optimize $\xi$ in Section \ref{sec42} but the problem will not remain an ILP any longer. To wrap up, for the addition (Rule (\textsc{Add})), the $\accd(\ell)$ of the exact result is the number of bits between $\ufp(\ell_1 + \ell_2)$ and the $\ufp$ of the error $e$ which is:
  \begin{equation}\label{tunisienne}\small
  e = \max\big(\ufp(\ell_1) - \accd(\ell_1), \ufp(\ell_2) - \accd(\ell_2) \big) - \xi(\ell)(\ell_1, \ell_2)
  \end{equation}
  Hence, the error on the addition in precision $\prec(\ell)$ is
  \begin{equation}\label{tunisienneprime}\small
  e^\prime = \max\big(\ufp(\ell_1) - \accd(\ell_1), \ufp(\ell_2) - \accd(\ell_2), \prec(\ell) \big) - \xi(\ell)(\ell_1, \ell_2)
  \end{equation}
  Equation (\ref{tunisienne}) is obtained from Equation (\ref{add}):
 \small \begin{eqnarray*}\small
\varepsilon(c_1^{\ell_1}\texttt{\#$p_1$} +^\ell c_2^{\ell_2}\texttt{\#$p_2$}) &\leq& \varepsilon({c_1^{\ell_1}\texttt{\#$p_1$}}) +
\varepsilon({c_2^{\ell_2}\texttt{\#$p_2$}}) + {2^{\ufp(c_1 + c_2)- \prec(\ell)}}\\
&=&\max \big( \varepsilon({c_1^{\ell_1}\texttt{\#$p_1$}}) ,
\varepsilon({c_2^{\ell_2}\texttt{\#$p_2$}}) \big)+ {2^{\ufp(c_1 + c_2)- \prec(\ell)}} - \xi(\ell)(\ell_1, \ell_2)\\
&=&\max \big( \ufp(\ell_1)-\accd(\ell_1) ,
\ufp(\ell_1)-\accd(\ell_1) \big)\\
&&+ {2^{\ufp(c_1 + c_2)- \prec(\ell)}} - \xi(\ell)(\ell_1, \ell_2)
  \end{eqnarray*}\normalsize
   Since $\accd(\ell) \leq \prec(\ell)$, we may get rid of the last term of $e^\prime$ in Equation~(\ref{tunisienneprime}) and the two equations generated for Rule (\textsc{ADD}) are derived from Equation~(\ref{accd}).
  \begin{equation}\label{accd}\small
 \accd(\ell) \leq \ufp(\ell) - \max\big(\ufp(\ell_1) - \accd(\ell_1), \ufp(\ell_2) - \accd(\ell_2) \big) - \xi(\ell)(\ell_1, \ell_2)
\end{equation}
For example, let us consider the piece of code hereafter.

{\centering\begin{minipage}{6cm}
{\tt \small\begin{lstlisting}[mathescape]
 x$^{\ell_2}$ = 5.0$^{\ell_1}$; y$^{\ell_4}$ = 3.0$^{\ell_3}$;
 z$^{\ell_8}$ = x$^{\ell_6}$ +$^{\ell_5}$ y$^{\ell_7}$;
 require_nsb(z,15)$^{\ell_{9}}$;
  \end{lstlisting}\normalsize}\end{minipage}}

 Wrapping up our constraints, we have
  $\
  15 \leq \accd(\ell_9) \leq \accd(\ell_8) \leq \accd(\ell_5),
 $ and
 $
 \accd(\ell_6) \geq \accd(\ell_8) + \ufp(\ell_6) -\ufp(\ell_7) + \xi(\ell_8)(\ell_6, \ell_7)
 $.
 Since $\ufp(\ell_6) = 2$ and $\ufp(\ell_7) = 1$ we have
  $
 \accd(\ell_6) \geq \accd(\ell_8) + 1 + \xi(\ell_8)(\ell_6, \ell_7)
 $
  and, consequently,
    $
 \accd(\ell_6) \geq 15 + 1 + 1 = 17
 $. Rule (\textsc{Sub}) for the subtraction is obtained similarly to the addition case.
 For Rule (\textsc{Mult}) of multiplication (and in the same manner Rule(\textsc{Div})), the reasoning mimics the one of the addition. Let $x$ and $y$ be two floating point numbers and $z$ the result of their product, $z = x^{\ell_1} \times^{\ell} y^{\ell_2}$. We denote by $\varepsilon(x)$, $\varepsilon(y)$ and $\varepsilon(z)$ the errors on $x$, $y$ and $z$, respectively. The error $\varepsilon(z)$ of this multiplication is $\varepsilon(z) = x \cdot \varepsilon(y) + y \cdot \varepsilon(x) + \varepsilon(x) \cdot \varepsilon(y)$. These numbers are bounded as shown in Equation~(\ref{product}).
 \begin{equation}\label{product} \small
 \begin{array}{rcl}
2^{\ufp(x)} \le x \le 2^{\ufp(x) + 1} \quad  \text{and} \quad 2^{\ufp(x) - \accd(x)} & \le & \varepsilon(x) \le 2^{\ufp(x) - \accd(x)+ 1}  \\
2^{\ufp(y)} \le y \le 2^{\ufp(y) + 1} \quad  \text{and} \quad 2^{\ufp(y) - \accd(y)} & \le & \varepsilon(y) \le 2^{\ufp(y) - \accd(y) + 1}  \\
2^{\ufp(x) + \ufp(y) - \accd(y)} + 2^{\ufp(y) + \ufp(x) - \accd(x)} & \le & \varepsilon(z) \le 2^{\ufp(z) - \accd(z) + 1}\\
+  2^{\ufp(x) + \ufp(y) - \accd(x) - \accd(y)} &&
\end{array}
 \end{equation}
By getting rid of the last term $2^{\ufp(x) + \ufp(y) - \accd(x) - \accd(y)}$ of the error $\epsilon(z)$ which is strictly less than the former two ones, assuming that $\ufp(x+y) = \ufp(z)$ and, finally, by reasoning on the exponents, we obtain the equations of Rule (\textsc{Mult}).
\[\small\accd(x) \ge \accd(z) + \xi(\ell)(\ell_1, \ell_2) - 1 \quad  \text{and} \quad \accd(y) \ge \accd(y) + \xi(\ell)(\ell_1, \ell_2) - 1 \]

 Although the square root (\textsc{Sqrt}) is included, e.g, in the IEEE754 Standard, it is not the case for the other elementary functions such as the natural logarithm, the exponential functions and the hyperbolic and trigonometric functions gathered in Rule (\textsc{Math}). Also, each implementation of these functions has its own $\accd$ which we have to know to model the propagation of errors in our analyses. To cope with this limitation, we consider that each elementary function
introduces a loss of precision of $\varphi$ bits, where $\varphi \in \mathbb{N}$ is a parameter of the analysis and consequently of our tool, XXX.

The rules of commands are rather classical, we use control points to distinguish many assignments of the same variable and also to implement joins in conditions and loops. Given a command $c$ and an environment $\lw$, $\mathcal{C}[c]\ \lw$ returns a pair $(C,\lw')$ made of a set $C$ of constraints and of a new environment $\lw'$. The function $\mathcal{C}$ is defined by induction on the structure of commands in figures~\ref{ilp} and ~\ref{ip}.
 For conditionals, we generate the constraints for the \texttt{then} and \texttt{else} branches plus additional constraints to join the results of both branches. For loops, we relate the number of significants bits at the end of the \texttt{body} to the $\accd$ of the same variables and the beginning of the loop.
\begin{figure}[!h]\small
\HRule
\vspace{0.2cm}
$$
\mathcal{E}^\prime[ c\texttt{\#p}^\ell] \lw= \big\{\acce(\ell) = 0  \big\}  \quad (\textsc{Const}^\prime)
\hspace{1cm}
\mathcal{E}^\prime[x^\ell] \lw = \big\{\acce(\lw(x))\geq \acce(\ell) \big\} \quad (\textsc{Id}^\prime)
$$
\vspace{0.1cm}
$$
\begin{array}{c}
\mathcal{E}^\prime[e_1^{\ell_1}+^{\ell}e_2^{\ell_2}] \lw = \mathcal{E}^\prime[e_1^{\ell_1}] \lw \ \cup \ \mathcal{E}^\prime[e_2^{\ell_2}]\lw \quad (\textsc{Add}^\prime)
\\
 \cup\\
 \left\{ \begin{array}{c}\acce(\ell) \geq \acce(\ell_1), \ \acce(\ell) \geq \acce(\ell_2), \\
  \acce(\ell) \geq \ufp(\ell_1) - \ufp(\ell_2) +\accd(\ell_2) -\accd(\ell_1) + \acce(\ell_2) + \xi(\ell)(\ell_1, \ell_2), \\
  \acce(\ell) \geq \ufp(\ell_2) - \ufp(\ell_1) +\accd(\ell_1) -\accd(\ell_2) + \acce(\ell_1) + \xi(\ell)(\ell_1, \ell_2)
  \end{array} \right\}
 \end{array}
$$
$$
\begin{array}{c}
\mathcal{E}^\prime[e_1^{\ell_1}-^{\ell}e_2^{\ell_2}] \lw = \mathcal{E}^\prime[e_1^{\ell_1}] \lw \ \cup \ \mathcal{E}^\prime[e_2^{\ell_2}]\lw \quad(\textsc{Sub}^\prime)
\\
 \cup\\
 \left\{ \begin{array}{c}\acce(\ell) \geq \acce(\ell_1), \ \acce(\ell) \geq \acce(\ell_2), \\
  \acce(\ell) \geq \ufp(\ell_1) - \ufp(\ell_2) +\accd(\ell_2) -\accd(\ell_1) + \acce(\ell_2) + \xi(\ell)(\ell_1, \ell_2), \\
  \acce(\ell) \geq \ufp(\ell_2) - \ufp(\ell_1) +\accd(\ell_1) -\accd(\ell_2) + \acce(\ell_1) + \xi(\ell)(\ell_1, \ell_2)
  \end{array} \right\}
 \end{array}
$$
$$
\begin{array}{c}
\mathcal{E}^\prime[e_1^{\ell_1}\times^{\ell}e_2^{\ell_2}] \lw = \mathcal{E}^\prime[e_1^{\ell_1}] \lw \ \cup \ \mathcal{E}^\prime[e_2^{\ell_2}]\lw \quad (\textsc{Mult}^\prime)
\\
 \cup\\
 \left\{ \begin{array}{c}
   \acce(\ell) \geq \accd(\ell_1) + \acce(\ell_1) + \acce(\ell_2) -  2, \
   \acce(\ell) \geq \accd(\ell_2) + \acce(\ell_2) + \acce(\ell_1) -  2
  \end{array} \right\}
 \end{array}
$$
$$
\begin{array}{c}
\mathcal{E}^\prime[e_1^{\ell_1}\div^{\ell}e_2^{\ell_2}] \lw = \mathcal{E}^\prime[e_1^{\ell_1}] \lw \ \cup \ \mathcal{E}^\prime[e_2^{\ell_2}]\lw \quad (\textsc{Div}^\prime)
\\
  \cup\\
 \left\{ \begin{array}{c}
   \acce(\ell) \geq \accd(\ell_1) + \acce(\ell_1) + \acce(\ell_2) -  2, \
   \acce(\ell) \geq \accd(\ell_2) + \acce(\ell_2) + \acce(\ell_1) -  2
  \end{array} \right\}
 \end{array}
$$
 $$
\mathcal{E}^\prime\left[ \sqrt{e^{\ell_{1}}}^{\ell} \right]  \lw= \mathcal{E}^\prime[e_1^{\ell_1}] \lw \   \cup \  \big\{\acce(\ell) \geq \acce(\ell_1) \big\} \quad (\textsc{Sqrt}^\prime)
$$
 $$
\mathcal{E}^\prime\left[ \phi\big({e^{\ell_{1}}} \big)^{\ell}\right]  \lw= \mathcal{E}^\prime[e_1^{\ell_1}] \lw \  \cup \  \big\{\acce(\ell) \geq +\infty \big\}\ \text{with}\  \phi \in \{\sin, \cos, \tan, \log, \ldots\} \quad (\textsc{Math}^\prime)
$$
 $$\begin{array}{c}
\mathcal{C}^\prime\left[ x \texttt{:=}^\ell e^{\ell_1}\right]  \lw= \big(C, \lw \left[x \mapsto \ell\right] \big) \
 \text{where}\ C =  \mathcal{E}^\prime[e_1^{\ell_1}] \lw  \cup \left\{\acce(\ell_1) \geq \acce(\ell) \right\}
\end{array} \quad (\textsc{Assign}^\prime)
$$
 $$
\mathcal{C}^\prime\left[c_1^{\ell_1} \texttt{;} c_2^{\ell_2}  \right]  \lw= \big(C_1 \cup C_2 , \lw_2\big)
\ \text{with}\ \big(C_1, \lw_1 \big) = \mathcal{C}^\prime\left[c_1^{\ell_1}\right]\lw \ \text{and}\ \big(C_2, \lw_2 \big) = \mathcal{C}^\prime\left[c_2^{\ell_2}\right]\lw_1
\quad (\textsc{Seq}^\prime)
$$
\begin{equation*}
\begin{array}{c}
\mathcal{C}^\prime[\texttt{if}^\ell\ e^{\ell_0}\ \texttt{then}\ c^{\ell_1}\ \texttt{else}\ c^{\ell_2}]\ \lw =
 (C_1\cup C_2\cup C_3,\lw')\ \\ \text{where}\ \left|
\begin{array}{l}
 \forall x\in\text{Id},\ \lw'(x)=\ell,\
(C_1,\lw_1)=\mathcal{C}^\prime[c_1^{\ell_1}]\ \lw,\
(C_2,\lw_2)=\mathcal{C}^\prime[c_2^{\ell_2}]\ \lw,\\
C_3= \underset{x\in \text{Id}}{\bigcup} \left\{
 \acce(\lw_1(x)) \geq \acce(\ell), \
  \acce(\lw_2(x)) \geq \acce(\ell)
\right\}
\end{array}\right.\end{array}\quad (\textsc{Cond}^\prime)
\end{equation*}
\begin{equation*}
\begin{array}{c}
\mathcal{C}^\prime[\texttt{while}^\ell\ e^{\ell_0}\ \texttt{do}\ c^{\ell_1}]\ \lw =
 (C_1\cup C_2,\lw')\ \\ \text{where}\ \left|
\begin{array}{l}
 \forall x\in\text{Id},\ \lw'(x)=\ell, \
(C_1,\lw_1)=\mathcal{C}^\prime[c_1^{\ell_1}]\ \lw'\\
C_2= \underset{x\in \text{Id}}{\bigcup} \left\{
 \acce(\lw(x)) \geq \acce(\ell),\
  \acce(\lw_1(x)) \geq \acce(\ell)
\right\}
\end{array}\right.\end{array}\quad (\textsc{While}^\prime)
\end{equation*}
$$
\mathcal{C}^\prime[\texttt{require\_nsb}(x, \texttt{p})^\ell ] \lw = \emptyset \quad (\textsc{Req}^\prime)
$$
\center\noindent\rule{10cm}{0.2mm}\\
\vspace{0.2cm}
$\xi(\ell)(\ell_1, \ell_2) = \min\left(\begin{array}{l}\max\big(\ufp(\ell_2) - \ufp(\ell_1) + \accd(\ell_1) - \accd(\ell_2) - \acce(\ell_2), 0\big),\\
                                                         \max\big(\ufp(\ell_1) - \ufp(\ell_2) + \accd(\ell_2) - \accd(\ell_1) - \acce(\ell_1), 0\big),1
\end{array}\right)$
\vspace{0.3cm}
\HRule
\caption{\label{ip}Constraints solved by PI with $\min$ and $\max$ carry bit formulation.}
\end{figure}
\subsection{Policy Iteration for Optimized Carry Bit Propagation}\label{sec42}

The policy iterations algorithm is used to solve nonlinear fixpoint equations when the function is written as the infimum of functions for which a fixpoint can be easily computed. The infimum formulation makes the function not being differentiable in the classical sense. The one proposed in~\cite{CGGMP05} to solve smallest fixpoint equations in static analysis requires the fact that the function is order-preserving to ensure the decrease of the intermediate solutions provided by the algorithm. In this article, because of the nature of the semantics, we propose a policy iterations algorithm for a non order-preserving function.

More precisely, let $F$ be a map from a complete $L$ to itself such that $F=\inf_{\pi\in \Pi} f^\pi$. Classical policy iterations solve $F(\x)=\x$ by generating a sequence $(\x^k)_k$ such that $f^{\pi^k}(\x^k)=\x^k$ and $\x^{k+1}<\x^k$. The set $\Pi$ is called the set of policies and $f^\pi$ a policy map (associated to $\pi$). The set of policy maps has to satisfy the selection property meaning that for all $\x\in L$, there exists $\pi \in\Pi$ such that $F(\x)=f^\pi(\x)$. This is exactly the same as for each $\x\in L$, the minimization problem $\Min_{\pi\in\Pi} f^\pi(\x)$ has an optimal solution. If $\Pi$ is finite and $F$ is order-preserving, policy iterations converge in finite time to a fixpoint of $F$. The number of iterations is bounded from above by the number of policies. Indeed, a policy cannot be selected twice in the running of the algorithm. This is implied by the fact that the smallest fixpoint of a policy map is computed.
In this article, we adapt policy iterations to the problem of precision tuning.
The function $F$ here is constructed from inequalities depicted in Figure~\ref{ilp} and Figure~\ref{ip}. We thus have naturally constraints of the form $F(\x)\leq \x$. We will give details about the construction of $F$ at Proposition~\ref{prop-ipfun}. Consequently, we are interested in solving:
\begin{equation}\small
\label{pbfond}
\begin{array}{lll}
\displaystyle{\Min_{\accd,\acce}} & & \displaystyle{\sum_{\ell} \accd(\ell)}\\
&\st& F\left(\begin{array}{c}\accd\\ \acce\end{array}\right)\leq \left(\begin{array}{c}\accd\\ \acce\end{array}\right)\quad
\accd\in\nn^{Lab},\ \acce\in\nn^{Lab}
\end{array}
\end{equation}

Let $\xi:Lab\to \{0,1\}$. We will write $S_\xi^1$ the system of inequalities depicted in Figure~\ref{ilp} and $S_\xi^2$ the system of inequalities presented at Figure~\ref{ip}. Note that the final system of inequalities is $S_\xi=S_\xi^1\cup S_\xi^2$ meaning that we add new constraints to $S_\xi^1$. If the system $S_\xi^1$ is used alone, $\xi$ is the constant function equal to 1. Otherwise, $\xi$ is defined by the formula at the end of Figure~\ref{ip}.
\begin{proposition}
\label{prop-ipfun}
The following results hold:
\begin{enumerate}
\item Let $\xi$ the constant function equal to 1. The system $S_\xi^1$ can be rewritten as $\{\accd\in\nn^{Lab}\mid F(\accd)\leq (\accd)\}$ where $F$ maps $\rr^{Lab}\times \rr^{Lab}$ to itself, $F(\nn^{Lab}\times \nn^{Lab})\subseteq (\nn^{Lab}\times \nn^{Lab})$ and has coordinates which are the maximum of a finite family of affine order-preserving functions.
\item Let $\xi$ the function such that $\xi(\ell)$ equals the function defined at Figure~\ref{ip}. The system $S_\xi$ can be rewritten as $\{(\accd,\acce)\in\nn^{Lab}\times \nn^{Lab}\mid F(\accd,\acce)\leq (\accd,\acce) \}$ where $F$ maps $\rr^{Lab}\times \rr^{Lab}$ to itself, $F(\nn^{Lab}\times \nn^{Lab})\subseteq (\nn^{Lab}\times \nn^{Lab})$ and all its coordinates are the min-max of a finite family of affine functions.
\end{enumerate}
\end{proposition}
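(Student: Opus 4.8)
The plan is to read the rewriting directly off the inference rules of Figures~\ref{ilp} and~\ref{ip}, with no fixpoint computation involved: the statement is purely about the syntactic shape of the constraint system. The key observation is that, apart from the two boundary rules discussed below, every rule produces constraints of the shape $\accd(\ell')\ge A$ or $\acce(\ell')\ge A$, i.e.\ each constraint is a \emph{lower bound} on a single coordinate, where $A$ is built from the other coordinates, the integer constants $\ufp(\cdot)$, $\varphi$, $\pm1$, $\pm2$, and the carry term $\xi$. For a fixed coordinate, say $\accd(\ell)$, I would collect the finite set $\{A_1,\dots,A_m\}$ of all right-hand sides of the constraints bounding it, adjoin the projection $A_0=\accd(\ell)$ so that the family is never empty and unconstrained coordinates are handled, and set the $\ell$-th coordinate of $F$ to be $F_\ell(\x)=\max(A_0,\dots,A_m)$ (and symmetrically for the $\acce$-coordinates). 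Then $F_\ell(\x)\le x_\ell$ holds iff every lower bound on $x_\ell$ holds, so the whole system is exactly $\{\x:F(\x)\le\x\}$ read componentwise. Since $\ufp(x)=\lfloor\log_2 x\rfloor\in\mathbb Z$ by Equation~(\ref{ufp}) and the coefficient of every variable appearing in an $A_j$ is $0$ or $\pm1$, one gets $F(\nn^{Lab}\times\nn^{Lab})\subseteq\nn^{Lab}\times\nn^{Lab}$ in both cases.

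For part~(1), with $\xi\equiv1$ the carry term is the integer constant $1$, so every $A_j$ coming from Figure~\ref{ilp} is a genuine affine function. A quick inspection of rules (\textsc{Id}), (\textsc{Add})--(\textsc{Div}), (\textsc{Sqrt}), (\textsc{Math}), (\textsc{Assign}), (\textsc{Cond}), (\textsc{While}) and (\textsc{Req}) shows that each $A_j$ is either of the form $\accd(\ell'')+\text{const}$ or a pure constant (such as $\texttt{p}$ in (\textsc{Req})): the coefficient of every variable is $0$ or $1$, never negative, so each $A_j$ is affine and order-preserving. As a pointwise maximum of order-preserving functions is order-preserving, each $F_\ell$ is a maximum of a finite family of affine order-preserving functions, as claimed. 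Only the $\accd$-coordinates are constrained here, so $F$ may be taken to be the identity on the $\acce$-block, which keeps it an endomorphism of $\rr^{Lab}\times\rr^{Lab}$.

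For part~(2) the new ingredient is that $\xi(\ell)(\ell_1,\ell_2)$ is itself a $\min$ of two $\max$'es of affine functions and the constant $1$. The crux is a closure lemma for the class $\mathcal M$ of functions expressible as a finite $\min$ of finite $\max$'es of affine functions: $\mathcal M$ is stable under (i)~adding an affine function, via $a+\min_i\max_j b_{ij}=\min_i\max_j(a+b_{ij})$, and under (ii)~taking finite $\min$ and finite $\max$. Case~(ii) for $\max$ is the delicate one, since a $\max$ of $\min$'es is not syntactically a $\min$ of $\max$'es; here I would invoke distributivity of the lattice $(\rr,\min,\max)$, namely $\max(u,\min(v,w))=\min(\max(u,v),\max(u,w))$ and its dual, to push the outer $\max$ inward, obtaining $\max(\min_i X_i,\min_k Y_k)=\min_{i,k}\max(X_i,Y_k)\in\mathcal M$. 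Granting this lemma, each right-hand side $A_j$, being an affine expression into which the $\min$-$\max$ term $\xi$ has been substituted, lies in $\mathcal M$ by~(i), and the coordinate $F_\ell=\max_j A_j$ lies in $\mathcal M$ by~(ii). Order-preservation is dropped and not claimed, since rules such as (\textsc{Add}$'$) contribute terms like $-\accd(\ell_1)$ with a negative coefficient.

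Finally I would dispatch the two rules that are not plain lower bounds: (\textsc{Const}$'$) fixes $\acce(\ell)=0$ and (\textsc{Math}$'$) sets $\acce(\ell)=+\infty$. I would treat these coordinates as determined constants, substituting their values into the remaining constraints and working in the extended reals for the $+\infty$ case, so that they contribute only constant (hence trivially min-max-affine and, when finite, integer) coordinate functions and leave the structure of the other coordinates untouched. I expect step~(ii) of the closure lemma, the pointwise-maximum case, to be the only genuine obstacle: the rest is routine inspection and bookkeeping, but checking that aggregating several lower bounds on one coordinate stays inside the $\min$-$\max$-of-affine class, rather than producing an uncontrolled alternation of $\min$'es and $\max$'es, is precisely what makes the resulting $F$ amenable to the policy iterations used afterwards.
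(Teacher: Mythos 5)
Your proposal is correct and takes essentially the same route as the paper's own proof: you aggregate the lower-bound constraints coordinatewise into maxima (the paper does the same thing by structural induction on the rules, maintaining the set of already-constructed coordinates, instead of your one-shot collection), and your closure lemma for the min-max-of-affine class under sums and maxima is exactly the paper's Lemma~\ref{lemmaapx}, your lattice-distributivity identity $\max(\min_i X_i,\min_k Y_k)=\min_{i,k}\max(X_i,Y_k)$ being the content of its proof. Your explicit treatment of unconstrained coordinates via the projection $A_0$ and of the boundary rules $(\textsc{Const}^\prime)$ and $(\textsc{Math}^\prime)$ (substituting the determined constants, extended reals for the $+\infty$ bound) is in fact slightly more careful than the paper, which glosses over these points.
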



From Proposition~\ref{prop-ipfun}, when $S_\xi$ is used, we can write $F$ as $F=\min_{\pi\in\Pi} f^\pi$, where $f^\pi$ is the maximum of a finite family of affine functions and thus used a modified policy iterations algorithm. The set of policies here is a map $\pi:Lab\mapsto \{0,1\}$. A choice is thus a vector of 0 or 1. A policy map $f^\pi$ is a function $\nn^{Lab}$ to itself such that the coordinates are $f_{\ell}^\pi(\ell)$. If the coordinate $f_{\ell}^\pi(\ell)$ depends on $\xi$ then $\xi(\ell)=\pi(\ell)$. Otherwise, the function is the maximum of affine functions and a choice is not required.

\begin{algorithm}\scriptsize
\SetAlgoLined
\KwResult{An over-approximation of an optimal solution of Equation~\eqref{pbfond}}
 Let $k:=0$, $S:=+\infty$\;
 Choose $\pi^0\in \Pi$\;
 Select an optimal solution of $(\accd^k,\acce^k)$ the integer linear program:
  \[
 \Min \left\{\sum_{\ell\in Lab} \accd(\ell)\mid f^{\pi^k}(\accd,\acce)\leq (\accd,\acce),\ \accd\in\nn^{Lab},\ \acce\in\nn^{Lab}\right\}\enspace ;
 \]
 \eIf{$\sum_{\ell\in Lab} \accd^k(\ell)<S$}{
 $S:=\sum_{\ell\in Lab} \accd^k(\ell)$\;
 Choose $\pi^{k+1}\in \Pi$ such that $F(\accd^k,\acce^k)=f^{\pi^{k+1}}(\accd^k,\acce^k)$\;
 $k:=k+1$ and go to 3\;
 }
 {Return $S$ and $\accd^k$.}
 \caption{Non-monotone Policy Iterations Algorithm}\label{PIalgo}
\end{algorithm}

\begin{proposition}[Algorithm correctness]
The sequence $(\sum_{\ell\in Lab} \accd^k(\ell))_{0\leq k\leq K}$ generated by Algorithm~\ref{PIalgo} is of finite length (i.e. $K\in\nn$) and satisfies a strict decrease before convergence:
$\sum_{\ell\in Lab} \accd^{k+1}(\ell)<\sum_{\ell\in Lab} \accd^k(\ell)$ if $k<K-1$ and $\sum_{\ell\in Lab} \accd^{K}(\ell)=\sum_{\ell\in Lab} \accd^{K-1}(\ell)$. The number of terms is smaller than the number of policies.
\end{proposition}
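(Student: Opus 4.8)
The plan is to rely only on the representation $F=\min_{\pi\in\Pi}f^\pi$ provided by Proposition~\ref{prop-ipfun} and on the finiteness of the policy set $\Pi$ (all maps $Lab\to\{0,1\}$), while scrupulously avoiding any use of monotonicity of $F$: since here $F$ is merely a min-max of affine functions it need not be order-preserving, so the classical policy-iteration decrease argument does not apply verbatim. The whole proof hinges on one elementary \emph{feasibility-transfer} observation, which I would isolate first.

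First I would prove a weak decrease $\sum_{\ell}\accd^{k+1}(\ell)\le\sum_{\ell}\accd^{k}(\ell)$. By construction $(\accd^k,\acce^k)$ is an optimal, hence feasible, point of the integer program with policy $\pi^k$, so $f^{\pi^k}(\accd^k,\acce^k)\le(\accd^k,\acce^k)$. The selection step picks $\pi^{k+1}$ with $F(\accd^k,\acce^k)=f^{\pi^{k+1}}(\accd^k,\acce^k)$, and because $F=\min_\pi f^\pi\le f^{\pi^k}$ pointwise we obtain $f^{\pi^{k+1}}(\accd^k,\acce^k)=F(\accd^k,\acce^k)\le f^{\pi^k}(\accd^k,\acce^k)\le(\accd^k,\acce^k)$. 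Hence $(\accd^k,\acce^k)$ is feasible for the $\pi^{k+1}$-program, so its optimum cannot exceed $\sum_\ell\accd^k(\ell)$. The crucial remark, and the reason the non-monotone case still works, is that this chain evaluates $f^{\pi^{k+1}}$ at the single point $(\accd^k,\acce^k)$ and uses only $F\le f^\pi$; it never invokes $\x\le\y\Rightarrow F(\x)\le F(\y)$.

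Next I would read off the strict decrease and the terminal equality directly from the stopping rule. The test governing the loop is $\sum_\ell\accd^k(\ell)<S$ with $S=\sum_\ell\accd^{k-1}(\ell)$; whenever it passes, i.e. for $k\le K-1$, it yields exactly the announced strict inequality $\sum_\ell\accd^{k}(\ell)<\sum_\ell\accd^{k-1}(\ell)$. At the terminal index $K$ the test fails, so $\sum_\ell\accd^{K}(\ell)\ge\sum_\ell\accd^{K-1}(\ell)$, and combining this with the weak decrease of the previous step forces $\sum_\ell\accd^{K}(\ell)=\sum_\ell\accd^{K-1}(\ell)$, which is the equality at convergence.

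Finally, for finiteness and the bound by the number of policies, I would argue that no policy recurs among $\pi^0,\dots,\pi^{K-1}$: if $\pi^i=\pi^j$ with $i<j\le K-1$, the two iterates solve the very same integer program and therefore share its optimal value, $\sum_\ell\accd^i(\ell)=\sum_\ell\accd^j(\ell)$, contradicting the strict decrease just established. These $K$ policies being pairwise distinct gives $K\le|\Pi|$, and since $\Pi$ is finite we conclude $K\in\nn$ with the number of terms bounded by the number of policies; alternatively, a strictly decreasing sequence of non-negative integers is automatically finite, which already yields termination. A last point to discharge is that each program encountered is feasible so that the optimal $(\accd^k,\acce^k)$ exist: this follows inductively, as the feasibility-transfer step shows $(\accd^k,\acce^k)$ lies in the feasible set of the $\pi^{k+1}$-program, reducing everything to feasibility of the initial program. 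The hard part is thus entirely conceptual rather than computational: keeping the decrease argument free of any monotonicity assumption on $F$.
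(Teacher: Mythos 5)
Your proof is correct and follows essentially the same route as the paper's: the identical feasibility-transfer chain $f^{\pi^{k+1}}(\accd^k,\acce^k)=F(\accd^k,\acce^k)\leq f^{\pi^k}(\accd^k,\acce^k)\leq(\accd^k,\acce^k)$ gives the weak decrease, the stopping test gives strictness and the terminal equality, and the observation that a repeated policy would force equal optimal values bounds the number of iterations by the number of policies. Your added remarks---that monotonicity of $F$ is never invoked because $f^{\pi^{k+1}}$ is only evaluated at the single point $(\accd^k,\acce^k)$, and that feasibility of each successive program follows inductively from that of the initial one---are merely explicit articulations of what the paper's proof does implicitly.
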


\begin{proof}\small
Let $\sum_{\ell\in Lab} \accd^k(\ell)$ be a term of the sequence and $(\accd^k,\acce^k)$ be the optimal solution of $\Min\{\sum_{\ell\in Lab} \accd(\ell)\mid f^{\pi^k}(\accd,\acce)\leq (\accd,\acce),\ \accd\in\nn^{Lab},\ \acce\in\nn^{Lab}\}$. Then $F(\accd^k,\acce^k)\leq f^{\pi^k}(\accd^k,\acce^k)$ by definition of $F$. Moreover,\\ $F(\accd^k,\acce^k)=f^{\pi^{k+1}}(\accd^k,\acce^k)$ and  $f^{\pi^k}(\accd^k,\acce^k)\leq (\accd^k,\acce^k)$. It follows that $f^{\pi^{k+1}}(\accd^k,\acce^k)\leq (\accd^k,\acce^k)$ and $(\accd^k,\acce^k)$ is feasible for the minimisation problem for which $(\accd^{k+1},\acce^{k+1})$ is an optimal solution. We conclude that $\sum_{\ell\in Lab} \accd^{k+1}(\ell)\leq \sum_{\ell\in Lab} \accd^k(\ell)$ and the Algorithm terminates if the equality holds or continues as the criterion strictly decreases. Finally, from the strict decrease, a policy cannot be selected twice without terminating the algorithm. In conclusion, the number of iterations is smaller than the number of policies.\hfill$\Box$
\end{proof}

Figure ~\ref{ip} displays the new rules that we add to the global system of constraints in the case where we optimize the carry bit of the elementary operations. Before introducing the optimized function $\xi$, the definition of the unit in the last place $\ulp$ of a real number $x$ is defined in Equation(\ref{ulp}).
\begin{equation}\label{ulp}\small
    \ulp(x) = \ufp(x)- \accd(x)+1 \enspace.
\end{equation}

Indeed, during an operation between two numbers, the $\xi$ function is computed as follows: If the $\ulp$ of one of the two operands is greater than the $\ufp$ of the other one (or conversely) then the two numbers are not aligned and no carry bit can be propagated through the operation (otherwise $\xi = 1$). This idea is presented in Figure \ref{ip} in which $\xi$ is formulated by $\min$ and $\max$ operators. Winning one bit may seen a ridiculous optimization at first sight. However, when many operations are done in a program which has to compute with some tens of $\accd$, this is far from being negligible. Formally, let $x$ and $y$ be the operands of some operation whose result is $z$. The errors are $\varepsilon(x)$, $\varepsilon(y)$ and $\varepsilon(z)$ respectively on $x$, $y$ and $z$.
Note that the behaviour of $\xi$ is also valid in the case of propagation of the carry bit on the errors. Consequently, using the definitions of $\ufpe$ and $\acce$ introduced in Section \ref{sec31}, we have $\ufpe(x) = \ufp(x) - \accd(x)$ and $\ulpe(x) = \ufpe(x) - \acce(x) + 1$ with $\ufpe$ and $\ulpe$ are the unit in the first and in the last place of the errors, respectively, for $x$ (same reasoning for $y$ and $z$). The optimized function $\xi$ of Figure~\ref{ip} derives from Equation~(\ref{xi}).
\begin{equation}\label{xi}\small
	\xi (z, x, y) = \begin{cases} 0 &\ufpe(x) - \acce(x) \ge \ufp(y) - \accd(y) ~~ or~~ \text{conversely}, \\
		1 & \text{otherwise.} \end{cases}
\end{equation}
As mentioned in Equation~(\ref{xi}), to compute the $\ulp$ of the errors on the operands, we need to estimate the number of bits of the error $\acce$ for each operand which explains the new rules and constraints given in Figure \ref{ip}. Hence, these constraints are complementary to the rules of Figure \ref{ilp}, already explained in Section~\ref{sec41}, in which the only difference is that we activate the new function $\xi$ instead of its over-approximation of Figure~\ref{ilp}. Let us concentrate on the rules of Figure~\ref{ip}. The function $\mathcal{E}^\prime[e]\ \lw$ generates the new set of constraints for an expression $e \in Expr$ in the environment $\lw$. For Rule $(\textsc{Const}^\prime)$, the number of significant bits of the error $\acce = 0$ whereas we impose that the $\acce$ of a variable $x$ at control point $\ell$ is less than the last assignment of $\acce$ in $\lw(x)$ as shown in Rule $(\textsc{Id}^\prime)$ of Figure~\ref{ip}. Considering Rule $(\textsc{Add}^\prime)$, we start by generating the new set of constraints $\mathcal{E}^\prime[e_1^{\ell_1}] \lw$ and $\mathcal{E}^\prime[e_2^{\ell_2}] \lw$ on the operands at control points $\ell_1$ and $\ell_2$. Then, we require that $\acce(\ell) \geq \acce(\ell_1)$ and $\acce(\ell) \geq \acce(\ell_2) $ where the result of the addition is stored at control point $\ell$. Additionally, the number of significant bits of the error on the result $\acce(\ell)$ at control point $\ell$ is computed as shown hereafter.
\begin{equation*}\small
  \acce(\ell) \geq \max\left(
  \begin{array}{l}
\ufp(\ell_1) - \accd(\ell_1)\\
 \ufp(\ell_2) - \accd(\ell_2)
 \end{array}\right)
  - \min\left(
  \begin{array}{l}\ufp(\ell_1) - \accd(\ell_1) - \acce(\ell_1) \\ \ufp(\ell_2) - \accd(\ell_2) - \acce(\ell_2)
  \end{array}\right)
   + \xi(\ell)(\ell_1, \ell_2)
\end{equation*}
By breaking the $\min$ and $\max$ operators, we obtain the constraints on $\acce(\ell)$ of Rule $(\textsc{Add}^\prime)$. For the subtraction, the constraints generated are similar to the addition case. Considering now Rule $(\textsc{Mult}^\prime)$, as we have defined in  Section \ref{sec41}, $\varepsilon(z) = x \cdot \varepsilon(y) + y \cdot \varepsilon(x) + \varepsilon(x) \cdot \varepsilon(y)$ where $z$ is the result of the product of $x$ and $y$. By reasoning on the $\ulp$ of the error, we bound $\varepsilon(z)$ by
\begin{equation*}\small
\begin{aligned}
\varepsilon(z) = 2^{\ufp(x)} \cdot 2^{\ufp(y) - \accd(y)- \acce(y) + 1} + 2^{\ufp(y)}\cdot 2^{\ufp(x) - \accd(x) - \acce(x) + 1} \\+ 2^{\ufp(y) + \ufp(x) - \accd(x) -\accd(y) - \acce(x) - \acce(y)+2}
\end{aligned}
\end{equation*}
By selecting the smallest term $\ufp(y) + \ufp(x) - \accd(x) -\accd(y) - \acce(x) - \acce(y)+2$, we obtain that
\begin{equation*}\small
  \acce(\ell) \geq \max\left(
  \begin{array}{l}
\ufp(\ell_1) + \ufp(\ell_2) - \accd(\ell_1)\\
 \ufp(\ell_1) + \ufp(\ell_2) - \accd(\ell_2)
 \end{array}\right)
  -
  \begin{array}{l}\ufp(\ell_1) + \ufp(\ell_2) - \accd(\ell_1)- \\ \accd(\ell_2) - \acce(\ell_1) - \acce(\ell_2)\\ + 2
  \end{array} \enspace.
\end{equation*}
Finally, by simplifying the equation above we found the constraints of Rule $(\textsc{Mult}^\prime)$ in Figure \ref{ip} (same for Rule $(\textsc{Div}^\prime)$).
\normalsize
For Rule $(\textsc{Sqrt}^\prime)$, we generate the constraints on the expression $\mathcal{E}^\prime[e_1^{\ell_1}] \lw$ and we require that $\acce$ of the result stored at control point $\ell$ is greater than the $\acce$ of the expression a control point $\ell_1$. For Rule $(\textsc{Math}^\prime)$ , we assume that $\acce(\ell)$ is unbounded. Concerning the commands, we define the set $\mathcal{C}^\prime[c]\ \lw$ which has the same function as $\mathcal{C}$ defined in Figure \ref{ilp}. The reasoning on the commands also remains similar except that this time we reason on the number of bits of the errors $\acce$. The only difference is in Rule $(\textsc{Req}^\prime)$ where the set of constraints is empty. Let us recall that the constraints of Figure \ref{ip} are added to the former constraints of Figure \ref{ilp} and are sent to a linear solver (GLPK in practice). 
  \begin{figure}[tb]
\small
\noindent\rule{12.4cm}{0.25mm}
\begin{equation*}
  \frac{\lw(x) = c\texttt{\#p}}{\langle x^{\ell}, \lw \rangle \longrightarrow \langle c^{\ell}\texttt{\#p}, \lw \rangle}
\end{equation*}
\begin{equation*}
\frac{ c =  c_1 \odot c_2,\ \texttt{p} =  \ufp(c) - \ufpe\big(c^\ell\texttt{\#p}\big)}
{\langle c_1^{\ell_1}\texttt{\#$p_1$} \odot^\ell c_2^{\ell_2}\texttt{\#$p_2$}, \lw \rangle \longrightarrow \langle c\texttt{\#p}, \lw \rangle} \quad \odot \in \{+, -, \times, \div\}
\end{equation*}
\begin{equation*}
\frac{\langle e_1^{\ell_1}, \lw \rangle \longrightarrow \langle e_1^{\prime\ell_1}, \lw \rangle}
     {\langle e_1^{\ell_1} \odot^\ell e_2^{\ell_2}, \lw \rangle   \longrightarrow   \langle e_1^{\prime\ell_1} \odot^\ell e_2^{\ell_2}, \lw \rangle} \qquad
\frac{\langle e_2^{\ell_2}, \lw \rangle \longrightarrow \langle e_2^{\prime\ell_2}, \lw \rangle}
     {\langle c_1^{\ell_1}\texttt{\#p} \odot^\ell e_2^{\ell_2}, \lw \rangle \longrightarrow \langle c_1^{\ell_1}\sharp\texttt{p} \odot^\ell e_2^{\prime\ell_2}, \lw \rangle}
\end{equation*}
\begin{equation*}
  \frac{\langle e^{\ell_1}, \lw \rangle \longrightarrow \langle e^{\prime\ell_1}, \lw \rangle}{\langle\phi(e^{\ell_1})^\ell, \lw\rangle \longrightarrow \langle\phi(e^{\prime\ell_1})^\ell, \lw\rangle }
\qquad \frac{ c = \phi(c_1) \quad \texttt{q} = \ufp(c) - \ufpe(c^\ell\texttt{\#p}) }{\langle \phi(c_1^{\ell_1}\texttt{\#p})^\ell, \lw \rangle \longrightarrow \langle c^\ell\texttt{\#q}, \lw \rangle}\quad \phi \in \{\sin, \cos,\ldots\}
\end{equation*}

\begin{equation*}
  \frac{\langle e^{\ell_1}, \lw \rangle \longrightarrow \langle e^{\prime\ell_1}, \lw \rangle}{\langle\sqrt{e^{\ell_1}}^\ell, \lw\rangle \longrightarrow \langle\sqrt{e^{\prime\ell_1}}^\ell, \lw\rangle }
\qquad
\frac{c = \sqrt{c_1} \quad \texttt{q} = \ufp(c) - \ufpe(c^\ell\texttt{\#p}) }{\langle \sqrt{c_1^{\ell_1}\texttt{\#p}}^\ell, \lw \rangle \longrightarrow \langle c^\ell\texttt{\#q}, \lw \rangle}
\end{equation*}
\noindent\rule{12.4cm}{0.25mm}
 \caption{Small Step Operational semantics of arithmetic expressions.}\label{semantic}
\end{figure}

\section{Correctness} \label{sec5}

In this section, we present proofs of correctness concerning the soundness of the analysis (Section~\ref{sec51}) and the integer nature of the solutions (Section \ref{sec52}).

\subsection{Soundness of the Constraint System}\label{sec51}

Let $\equiv$ denote the syntactic equivalence and let $e^\ell \in Expr$ be an expression. We write $\const(e^\ell)$ the set of constants occurring in the expression $e^\ell$. For example, $\const(18.0^{\ell_1}\times^{\ell_2} x^{\ell_3} +^{\ell_4} 12.0^{\ell_5}\times^{\ell_6}y^{\ell_7} +^{\ell_8} z^{\ell_9}) =\{18.0^{\ell_1}, 12.0^{\ell_5}\}$. Also, we denote by $\tau\ :\ \text{Lab}\rightarrow \mathbb{N}$ a function mapping the labels of an expression to a $\accd$. The notation $\tau \models \mathcal{E}[ e^\ell] \lw$ means that $\tau$ is the minimal solution to the ILP
$\mathcal{E}[ e^\ell] \lw$. We write $\lw_\perp$ the empty environment ($dom(\lw_\perp) = \emptyset$).

The small step operational semantics of our language is displayed in Figure~\ref{semantic}. It is standard, the only originality being to indicate explicitly the $\accd$ of constants. For the result of an elementary operation, this $\accd$ is computed in function of the $\accd$ of the operands. Lemma \ref{lem1} below asses the soundness of the constraints for one step of the semantics.
\begin{lem}\label{lem1}
Given an expression $e^\ell \in Expr$, if $e^\ell \rightarrow e^{\prime\ell}$ and $\tau \models \mathcal{E}[ e^\ell] \lw_\perp$  then
for all $c^{\ell_c}$\texttt{\#p} $\in$ $\const(e^{\prime\ell})$ we have \texttt{p} = $\tau(\ell_c)$.
\end{lem}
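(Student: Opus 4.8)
The plan is to argue by case analysis on the reduction rule of Figure~\ref{semantic} that fires the single step $e^\ell\to e^{\prime\ell}$, carrying as a standing invariant that every constant occurring in $e^\ell$ already bears the precision prescribed by $\tau$, i.e. $\texttt{p}=\tau(\ell_c)$ for all $c^{\ell_c}\texttt{\#p}\in\const(e^\ell)$. This invariant reflects the way the semantics is exercised after tuning---each constant is annotated with the $\accd$ returned by the solver---and the lemma is precisely the statement that it survives one reduction. Under the empty environment $\lw_\perp$ there are no free variables, so $e^\ell$ is a finite tree built from constants and the operators $+,-,\times,\div$ and the functions $\phi,\sqrt{\cdot}$; the variable-lookup rule never applies, and a single step rewrites exactly one redex, which is necessarily an operator or an elementary function applied to constants. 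Every constant of $e^{\prime\ell}$ other than the one just produced is untouched and keeps its label and precision, so by the invariant it still satisfies $\texttt{p}=\tau(\ell_c)$. It therefore suffices to show that the fresh constant, sitting at the label $\ell$ of the fired operator, receives precision exactly $\tau(\ell)$.

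The second ingredient is that minimality makes the constraints tight: because the constraint graph of Figure~\ref{ilp} mirrors the expression tree---each label is bounded from below only through the operator that consumes it, and each constant label occurs in a single operator's constraints---minimising $\sum_\ell\accd(\ell)$ forces every operand inequality to hold with equality at the minimal solution $\tau$. I would then settle the base cases one operator at a time. For an addition $c_1^{\ell_1}\texttt{\#}p_1+^\ell c_2^{\ell_2}\texttt{\#}p_2\to c^\ell\texttt{\#p}$ the semantics sets $\texttt{p}=\ufp(c)-\ufpe(c^\ell\texttt{\#p})$; using $\ufpe(x)=\ufp(x)-\accd(x)$, the invariant $p_i=\tau(\ell_i)$, and the derivation of Equations~(\ref{tunisienne})--(\ref{accd}) from Equation~(\ref{add}), one gets $\ufpe(c)=\max\big(\ufp(\ell_1)-\tau(\ell_1),\ufp(\ell_2)-\tau(\ell_2)\big)+\xi(\ell)(\ell_1,\ell_2)$, the roundoff term $2^{\ufp(c)-\prec(\ell)}$ being dominated since $\accd(\ell)\le\prec(\ell)$. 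Substituting the tight operand constraints $\tau(\ell_i)=\tau(\ell)+\ufp(\ell_i)-\ufp(\ell)+\xi(\ell)(\ell_1,\ell_2)$ makes both arguments of the $\max$ equal to $\ufp(\ell)-\tau(\ell)-\xi(\ell)(\ell_1,\ell_2)$, whence $\texttt{p}=\tau(\ell)$. Subtraction is identical; multiplication and division use Equation~(\ref{product}) with the tight form $\tau(\ell_i)=\tau(\ell)+\xi(\ell)(\ell_1,\ell_2)-1$ of Rule~(\textsc{Mult}); and the cases of $\phi$ and $\sqrt{\cdot}$ follow from Equations~(\ref{maths})--(\ref{racine}), where the tight constraints $\tau(\ell_1)=\tau(\ell)+\varphi$ and $\tau(\ell_1)=\tau(\ell)$ give $\texttt{q}=\tau(\ell)$ after dropping the dominated $2^{\ufp(\cdot)-\prec(\ell)}$ terms.

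The delicate point, and the step I would write out most carefully, is the faithful matching of the carry bit across the two formalisms: the carry enters the semantic error as $\ufpe(c)=\max(\ldots)+\xi(\ell)(\ell_1,\ell_2)$ and hence the result accuracy as a $-\xi(\ell)(\ell_1,\ell_2)$, and one must check that this is the very same quantity that appears with a $+\xi(\ell)(\ell_1,\ell_2)$ in the operand constraints of Figure~\ref{ilp}---the constant $1$ in the ILP, or the $\min$--$\max$ value of Figure~\ref{ip} in the policy-iteration refinement. One must likewise verify that abstracting the real inequality of Figure~\ref{error} into an identity on $\ufp$ and $\accd$ neither gains nor loses a bit; this is exactly where the domination $\accd(\ell)\le\prec(\ell)$ and the passage from Equation~(\ref{tunisienne}) to Equation~(\ref{accd}) are used. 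Everything else is routine bookkeeping on exponents.
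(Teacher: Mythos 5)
Your proof is correct and takes essentially the same route as the paper's: a case analysis over the operation rules in which the minimality of $\tau$ makes the operand constraints of Figure~\ref{ilp} tight, so that the semantic precision $\texttt{p}=\ufp(c)-\ufpe(c)$ collapses to $\tau(\ell)$, with the truncation term discharged via $\accd(\ell)\le\prec(\ell)$ exactly as in the passage from Equation~(\ref{tunisienneprime}) to Equation~(\ref{accd}). Your additions---the explicit standing invariant that the constants of $e^\ell$ already carry the precisions $\tau(\ell_c)$ (which the lemma tacitly assumes and which the recurrence of Theorem~\ref{theorem1} needs anyway), the justification of tightness from the tree structure of the constraint graph, and writing the carry with the sign $\ufpe(c)=\max(\cdot)+\xi(\ell)(\ell_1,\ell_2)$ where Equation~(\ref{tunisienne}) carries a sign slip---are refinements of the paper's argument, not a different one.
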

\begin{proof}\small
By case examination of the rules of Figure \ref{ilp}. Hereafter, we focus on the most interesting case of addition of two constants. Recall that $\ufpe(\ell) = \ufp(\ell) - \accd(\ell)$ for any control point $\ell$.
Assuming that $e^\ell \equiv c_1^{\ell_1} +^\ell c_2^{\ell_2}$ then by following the reduction rule of Figure \ref{semantic}, we have $e^\ell \rightarrow c^{\ell}\texttt{\#p}$ with $\texttt{p} =  \ufp(c) - \ufpe\big(c\big)$.
On the other side, by following the set of constraints of Rule (\textsc{Add}) in Figure \ref{ilp} we have
$\small\mathcal{E}[e^\ell] \lw =
 \{\accd(\ell_1) \geq \accd(\ell) + \ufp(\ell_1) - \ufp(\ell) +  \xi(\ell)(\ell_1, \ell_2),
\hspace{0.1cm}\accd(\ell_2) \geq \accd(\ell) + \ufp(\ell_2)
$ $ - \ufp(\ell) + \xi(\ell)(\ell_1, \ell_2)
 \}$.
 These constraints can be written as
 \begin{eqnarray*}\small
 \accd(\ell) &\leq & \ufp(\ell) - \ufp(\ell_1) + \accd(\ell_1) - \xi(\ell)(\ell_1, \ell_2) \\
 \accd(\ell) &\leq & \ufp(\ell) - \ufp(\ell_2) + \accd(\ell_2) - \xi(\ell)(\ell_1, \ell_2)
 \end{eqnarray*}
 and may themselves be rewritten as Equation~(\ref{accd}), i.e. $$\small \accd(\ell) \leq \ufp(\ell) - \max\big(\ufp(\ell_1) - \accd(\ell_1), \ufp(\ell_2) - \accd(\ell_2) \big) - \xi(\ell)(\ell_1, \ell_2)\enspace .$$
 Since, obviously, $\ufp(c) = \ufp(\ell)$ and since the solver finds the minimal solution to the ILP, it remains to show that
 \[\small\ufpe(\ell) = \max\big(\ufp(\ell_1) -\accd(\ell_1), \ufp(\ell_2) -\accd(\ell_2), \ufp(\ell)-\prec(\ell)\big) -\xi(\ell)(\ell_1, \ell_2)\]
which corresponds to the assertion of Equation(\ref{tunisienneprime}). Consequently, $\accd(\ell) = \texttt{p}$ as required, for this case, in Figure \ref{semantic}.\hfill$\Box$
\end{proof}
\begin{theorem}\label{theorem1}
Given an expression $e^\ell \rightarrow e^{\prime\ell}$. If $e^\ell \rightarrow^* e^{\prime\ell}$ and if  $\tau \models \mathcal{E}[ e^\ell] \lw_\perp$ , then $\forall$ $c^{\ell_c}\texttt{\#p}$ $\in$ $\const(e^{\prime\ell})$ we have $\texttt{p} = \tau(\ell_c)$.
\end{theorem}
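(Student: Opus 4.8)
The plan is to prove Theorem~\ref{theorem1} by induction on the number of reduction steps in $e^\ell \rightarrow^* e^{\prime\ell}$, using Lemma~\ref{lem1} as the base case for a single step. The key technical point that makes the induction work is that the constraint system $\mathcal{E}[e^\ell]\lw_\perp$ must remain compatible with the constraint systems generated for the intermediate expressions produced along the reduction sequence, so that the minimal solution $\tau$ continues to assign the correct $\accd$ to every constant that appears.

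First I would set up the induction on the length $n$ of the derivation $e^\ell \rightarrow^n e^{\prime\ell}$. For $n=0$ there is nothing to reduce, and every $c^{\ell_c}\texttt{\#p} \in \const(e^\ell)$ already satisfies $\texttt{p} = \tau(\ell_c)$ by the way constants are handled in Rule (\textsc{Const}) together with the definition of $\tau \models \mathcal{E}[e^\ell]\lw_\perp$. For $n=1$, the claim is exactly Lemma~\ref{lem1}. For the inductive step, I would write the reduction as $e^\ell \rightarrow e^{\prime\prime\ell} \rightarrow^{n-1} e^{\prime\ell}$ and argue as follows: by Lemma~\ref{lem1}, the single step $e^\ell \rightarrow e^{\prime\prime\ell}$ produces fresh constants whose precision $\texttt{p}$ matches the value $\tau(\ell_c)$ dictated by the minimal solution of $\mathcal{E}[e^\ell]\lw_\perp$. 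I then need to connect $\tau$, which solves the constraints for $e^\ell$, to the minimal solution of $\mathcal{E}[e^{\prime\prime\ell}]\lw_\perp$, so that the induction hypothesis applies to the shorter derivation $e^{\prime\prime\ell} \rightarrow^{n-1} e^{\prime\ell}$.

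The main obstacle, and the step I would spend the most care on, is precisely this compatibility between successive constraint systems. When a subexpression is reduced, say an innermost operation $c_1^{\ell_1} \odot^\ell c_2^{\ell_2}$ is replaced by a single constant $c^\ell\texttt{\#p}$, the constraint system changes structurally: the constraints relating $\accd(\ell_1)$, $\accd(\ell_2)$ and $\accd(\ell)$ disappear and are replaced by the trivial treatment of a constant at $\ell$. The argument is that because the semantics sets $\texttt{p} = \ufp(c) - \ufpe(c)$ with $\ufpe(\ell)$ being exactly the value forced by the minimal solution (as established in the proof of Lemma~\ref{lem1}), the restriction of $\tau$ to the labels surviving in $e^{\prime\prime\ell}$ is still the minimal solution of $\mathcal{E}[e^{\prime\prime\ell}]\lw_\perp$. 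The labels of the unreduced part of the expression are untouched by the reduction rule, so their constraints are identical in both systems, and the newly created constant carries precisely the precision that $\tau$ assigned to its label. This guarantees that $\tau$ restricted appropriately satisfies $\tau \models \mathcal{E}[e^{\prime\prime\ell}]\lw_\perp$, which is the hypothesis needed to invoke the induction hypothesis.

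Finally I would conclude by applying the induction hypothesis to $e^{\prime\prime\ell} \rightarrow^{n-1} e^{\prime\ell}$, obtaining that every $c^{\ell_c}\texttt{\#p} \in \const(e^{\prime\ell})$ satisfies $\texttt{p} = \tau(\ell_c)$, which is exactly the desired conclusion. The congruence reduction rules of Figure~\ref{semantic} (those that reduce a subexpression in place) do not introduce new constants and only rename labels within the reduced subterm, so they are handled uniformly by the same compatibility observation. I expect the bookkeeping of which labels persist and which constraints are preserved to be the only delicate part; the arithmetic itself is already discharged by Lemma~\ref{lem1}.
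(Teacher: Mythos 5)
Your proposal is correct and takes essentially the same route as the paper, whose entire proof of Theorem~\ref{theorem1} is ``by recurrence on the length of the reduction path,'' i.e.\ exactly your induction on the number of steps with Lemma~\ref{lem1} discharging each single step. The compatibility argument you develop---that the restriction of $\tau$ to the surviving labels remains the minimal solution of the constraint system of the intermediate expression, since the redex's constraints only bounded the disappearing operand labels and the enclosing constraints are untouched---is exactly the bookkeeping the paper leaves implicit, so you have simply made the paper's one-line proof explicit.
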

\begin{proof}\small
By recurrence on the length of the reduction path. \hfill$\Box$
\end{proof}
\subsection{ILP Nature of the Problem}\label{sec52}
In this section, we give insights about the complexity of the problem. The computation relies on integer linear programming. Integer linear programming is known to belong to the class of NP-Hard problems. A lower bound of the optimal value in a minimization problem can be furnished by the continuous linear programming relaxation. This relaxation is obtained by removing the integrity constraint. Recall that a (classical) linear program can be solved in polynomial-time. Then, we can solve our problem in polynomial-time if we can show that the continuous linear programming relaxation of our ILP has an unique optimal solution with integral coordinates.
Proposition~\ref{propLP} presents a situation where a linear program has a unique optimal solution which is a vector of integers.

\begin{proposition}
\label{propLP}
Let $G:[0,+\infty)^d\mapsto [0,+\infty)^d$ be an order-preserving function such that $G(\nn^d)\subseteq \nn^d$. Suppose that the set $\{y\in\nn^d\mid G(y)\leq y\}$ is non-empty. Let $\varphi:\rr^d\mapsto \rr$ a strictly monotone function such that $\varphi(\nn^d)\subseteq \nn$. Then, the minimization problem:
\[
\displaystyle{\Min_{y\in [0,+\infty)^d}}\ \varphi(y)\ \st G(y)\leq y
\]
has an unique optimal solution which is integral.
\end{proposition}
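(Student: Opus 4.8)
The plan is to recognize the feasible set $F=\{y\in[0,+\infty)^d\mid G(y)\leq y\}$ as the set of post-fixpoints of the order-preserving map $G$, and to exploit that $\varphi$ is strictly monotone. Since $\varphi$ strictly increases along the product order (if $y\leq y'$ and $y\neq y'$ then $\varphi(y)<\varphi(y')$, as holds for the actual objective $\sum_\ell \accd(\ell)$), a \emph{least} element of $F$ is automatically the \emph{unique} minimizer. Hence the whole statement reduces to two claims: that $F$ admits a least element $y^*$, and that $y^*$ is integral. I would therefore spend no effort on the optimization as such and concentrate entirely on producing this least post-fixpoint and proving it lies in $\nn^d$.

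For the construction I would use Kleene iteration rather than an abstract Knaster--Tarski argument, because $[0,+\infty)^d$ lacks a top element while the hypotheses supply exactly what is needed to replace it. Set $x_0=0$ and $x_{n+1}=G(x_n)$. Since $G$ takes values in $[0,+\infty)^d$ we have $x_0=0\leq G(0)=x_1$, and monotonicity of $G$ gives a non-decreasing sequence $x_0\leq x_1\leq\cdots$. The non-emptiness hypothesis provides some $\bar y\in\nn^d$ with $G(\bar y)\leq\bar y$; a straightforward induction (using $x_0\leq\bar y$ and $x_n\leq\bar y\Rightarrow x_{n+1}=G(x_n)\leq G(\bar y)\leq\bar y$) shows the whole sequence is bounded above by $\bar y$. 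Two facts then combine: because $G(\nn^d)\subseteq\nn^d$ and $x_0\in\nn^d$, every $x_n$ lies in $\nn^d$; and because each of the $d$ coordinate sequences is a non-decreasing sequence of naturals bounded above by a coordinate of $\bar y$, each is eventually constant. As $d$ is finite, there is an $N$ with $x_N=x_{N+1}=G(x_N)$, so $y^*:=x_N$ is an integral fixpoint of $G$.

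It remains to check that $y^*$ is the least element of $F$. For any $y\in F$ one proves $x_n\leq y$ for all $n$ by induction: $x_0=0\leq y$, and $x_n\leq y$ gives $x_{n+1}=G(x_n)\leq G(y)\leq y$; hence $y^*=x_N\leq y$. Since $y^*$ is itself a fixpoint it belongs to $F$, so $y^*$ is indeed the minimum of $F$. Strict monotonicity of $\varphi$ finishes the argument: any feasible $y$ with $y^*\leq y$ and $y\neq y^*$ satisfies $\varphi(y^*)<\varphi(y)$, so $y^*$ is the unique optimal solution, and it is integral as shown. The delicate point — the main obstacle — is the termination and integrality of the iteration: it is the interplay between $G(\nn^d)\subseteq\nn^d$, the integral upper bound $\bar y$ furnished by the non-emptiness assumption, and the finiteness of $d$ that forces the ascending chain of integer vectors to stabilize at an integral fixpoint. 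Without any one of these three ingredients the least post-fixpoint could fail to be integral, or even to exist inside $[0,+\infty)^d$.
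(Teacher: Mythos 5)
Your proof is correct, and it takes a genuinely different route from the paper's. The paper argues non-constructively: it sets $u=\inf L$ where $L$ is the feasible (post-fixpoint) set, reduces everything to showing $u\in\nn^d$, and then invokes Knaster--Tarski twice --- once on the integer lattice $M=\{y\in\nn^d\mid y\leq G(y),\ y\leq u\}$ to produce an integral fixpoint $v=\sup M$ with $G(v)=v\leq u$, and once more to identify $u$ with the least fixpoint of $G$, forcing $u=v\in\nn^d$. You instead build the least post-fixpoint bottom-up by Kleene iteration $x_0=0$, $x_{n+1}=G(x_n)$, and use the three hypotheses exactly where they bite: $G(\nn^d)\subseteq\nn^d$ keeps the chain integral, the integral feasible point $\bar y$ bounds it, and finiteness of $d$ (a non-decreasing integer chain bounded above) forces stabilization at an integral fixpoint $y^*$ in finitely many steps; the standard induction $x_n\leq y$ for every real feasible $y$ then shows $y^*$ is the least element of the feasible set, and strict monotonicity of $\varphi$ gives uniqueness just as in the paper. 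What your version buys is twofold: it is effective (it yields an algorithm terminating in at most $\sum_i \bar y_i$ iterations) and it sidesteps the complete-lattice subtleties that the paper's appeal to Tarski glosses over --- $[0,+\infty)^d$ has no top element, so Tarski does not apply verbatim and the paper implicitly needs the closure of post-fixpoints under infima (or a restriction to $[0,\bar y]$), a gap your chain argument never opens. What the paper's version buys is brevity once Tarski is taken as given, and it never needs the chain to terminate, which would matter if one weakened integrality. Both proofs correctly note the same reduction you state up front: since $\varphi$ increases strictly along the product order, producing a least feasible point that is integral settles existence, uniqueness, and integrality simultaneously.
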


\begin{proof}\small
Let $L:=\{x\in [0,+\infty)^d\mid G(x)\leq x\}$ and $u=\inf L$. It suffices to prove that $u\in\nn^d$. Indeed, as $\varphi$ is stricly monotone then $\varphi(u)<\varphi(x)$ for all $x\in [0,+\infty)^d$ s.t. $G(x)\leq x$ and $x\neq u$. The optimal solution is thus $u$.
If $u=0$, the result holds. Now suppose that $0<u$, then $0\leq G(0)$. Let $M:=\{y\in \nn^d\mid y\leq G(y), y\leq u\}$. Then $0\in M$ and we write $v:=\sup M$. As $M$ is a complete lattice s.t. $G(M)\subseteq M$, from Tarski's theorem, $v$ satisfies $G(v)=v$ and $v\leq u$. Moreover, $v\in\nn^d$ and $v\leq u$. Again, from Tarski's theorem, $u$ is the smallest fixpoint of $G$ then it coincides with $v$. We conclude that $u\in\nn^d$.\hfill$\Box$
\end{proof}

\begin{theorem}
Assume that the system of inequalities depicted in Figure~\ref{ilp} has a solution. The smallest amount of memory $\sum_{\ell\in Lab} \accd(\ell)$ for the system of inequalities depicted in Figure~\ref{ilp} can be computed in polynomial-time by linear programming.
\end{theorem}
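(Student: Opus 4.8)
The plan is to identify the minimization of $\sum_{\ell\in Lab}\accd(\ell)$ over the constraints of Figure~\ref{ilp} with the abstract minimization problem of Proposition~\ref{propLP}, and then to read off polynomial solvability from the fact that an ordinary (real) linear program is solvable in polynomial time~\cite{schrijver1998theory}. Concretely, by the first item of Proposition~\ref{prop-ipfun} (applied with $\xi$ the constant function equal to $1$), the system $S_\xi^1$ of Figure~\ref{ilp} is exactly $\{\accd\in\nn^{Lab}\mid F(\accd)\leq\accd\}$, where each coordinate of $F$ is the maximum of a finite family of affine order-preserving functions and $F(\nn^{Lab})\subseteq\nn^{Lab}$. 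I would first observe that, since $\max_j(\cdot)\leq\accd(\ell)$ is equivalent to the conjunction of the inequalities indexed by $j$, the relation $F(\accd)\leq\accd$ is literally the conjunction of all the linear constraints of Figure~\ref{ilp}; hence the feasible set of the continuous relaxation (obtained by dropping integrality) is precisely $L:=\{\accd\in[0,+\infty)^{Lab}\mid F(\accd)\leq\accd\}$.

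Next I would check that $F$ and the objective meet the hypotheses of Proposition~\ref{propLP} with $d=\mathrm{card}(Lab)$, $G:=F$ and $\varphi(\accd):=\sum_{\ell\in Lab}\accd(\ell)$. The map $F$ is order-preserving, being a coordinatewise maximum of order-preserving affine maps: each propagation constraint lower-bounds one $\accd(\ell)$ by an affine form whose linear part has nonnegative (in fact unit) coefficients. Adjoining the constant $0$ to each maximum---legitimate because $\accd\geq0$ is part of the problem---guarantees $F([0,+\infty)^{Lab})\subseteq[0,+\infty)^{Lab}$, while $F(\nn^{Lab})\subseteq\nn^{Lab}$ is given by Proposition~\ref{prop-ipfun}. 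The objective $\varphi$ is strictly monotone and satisfies $\varphi(\nn^{Lab})\subseteq\nn$. Finally, the set $\{\accd\in\nn^{Lab}\mid F(\accd)\leq\accd\}$ is non-empty: this is exactly the standing assumption that the system of Figure~\ref{ilp} admits a solution.

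Proposition~\ref{propLP} then yields that the continuous problem $\Min_{\accd\in[0,+\infty)^{Lab}}\varphi(\accd)$ subject to $F(\accd)\leq\accd$ has a unique optimal solution $u$, and that $u\in\nn^{Lab}$. Since the relaxation and the ILP share the same feasible integral points and the relaxation optimum $u$ is itself integral, $u$ is also optimal for the ILP, so the two optimal values coincide; in particular the smallest memory $\sum_{\ell\in Lab}\accd(\ell)$ is attained by the integral vector $u$. As $F(\accd)\leq\accd$ is a system of finitely many affine inequalities and $\varphi$ is linear, the relaxation is an ordinary linear program over the reals, which is solvable in polynomial time~\cite{schrijver1998theory}; this gives the claimed polynomial-time computation. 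The main obstacle is the bookkeeping that turns the equivalence ``$F(\accd)\leq\accd$ iff all constraints of Figure~\ref{ilp} hold'' into a genuine verification of the hypotheses of Proposition~\ref{propLP}---principally the order-preservation of $F$ (which rests on the unit sign of the coefficients in the propagation rules) and the reduction of integrality of the LP optimum to the fixpoint/Tarski argument already carried out inside Proposition~\ref{propLP}.
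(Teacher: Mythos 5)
Your proposal is correct and follows essentially the same route as the paper: invoke the first item of Proposition~\ref{prop-ipfun} to write the constraints as $F(\accd)\leq\accd$ with $F$ order-preserving and integer-stable, apply Proposition~\ref{propLP} with $\varphi(\accd)=\sum_{\ell\in Lab}\accd(\ell)$, and conclude polynomial-time solvability via the integral optimum of the continuous relaxation. You merely spell out details the paper leaves implicit (e.g.\ adjoining $0$ to keep $F$ nonnegative-valued and the explicit LP-relaxation bookkeeping), which is a faithful elaboration rather than a different argument.
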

\begin{proof}\small
The function $\sum_{\ell\in Lab} \accd(\ell)$ is strictly monotone and stable on integers. From the first statement of Proposition~\ref{prop-ipfun}, the system of constraints is of the form $F(\accd)\leq \accd$ where $F$ is order-preserving and stable on integers. By assumption, there exists an vector of integers $\accd$ s.t. $F(\accd)\leq \accd$. We conclude from Proposition~\ref{propLP}. \hfill$\Box$
\end{proof}
For the second system, in practice, we get integral solutions to the continuous linear programming relaxation of our ILP of Equation~\eqref{pbfond}. However, because of the lack of monotonicity of the functions for the rules (ADD) and (SUB), we cannot exploit Proposition~\ref{propLP} to prove the polynomial-time solvability. 
\section{Experimental Results}\label{sec6}
\begin{table}[tb]
\centerline{
\scriptsize
\begin{tabular}{lrrrrrrrrrrrrrr}
\hline
Program & TH   & &BL & IEEE & ILP-time && BL & IEEE& PI-time & H & S & D & LD\\
\hline
          & $10^{-4} $ &\hspace{0.6cm}& 61\% & 43\% & 0.9s & \hspace{0.6cm}& 62\% & 45\% & 1.5s & 8 & 88 & 25 & 0 \\
          & $10^{-6} $ && 50\% & 21\% & 0.9s && 51\% & 21\% & 1.4s & 2 & 45 & 74 & 0 \\
\textbf{arclength} & $10^{-8} $ && 37\% & 3\%  & 0.8s && 38\% & 4\%  & 1.6s & 2 & 6  & 113& 0 \\
          & $10^{-10}$ && 24\% & -1\% & 1.0s && 25\% & -1\% & 1.7s & 2 & 0  & 116& 3 \\
          & $10^{-12}$ && 12\% & -17\%& 0.3s && 14\% & -8\% & 1.5s & 2 & 0  & 109& 10\\
\hline
        & $10^{-4}$  && 64\% & 45\% & 0.1s && 67\% & 56\% & 0.5s & 6  & 42 & 1  & 0 \\
        & $10^{-6}$  && 53\% & 30\% & 0.2s && 56\% & 31\% & 0.5s & 1  & 27 & 21 & 0 \\
\textbf{simpson}& $10^{-8}$  && 40\% & 4\%  & 0.1s && 43\% & 7\%  & 0.3s & 1  & 5  & 43 & 0 \\
        & $10^{-10}$ && 27\% & 1\%  & 0.1s && 28\% & 1\%  & 0.4s & 1  & 0  & 48 & 0 \\
        & $10^{-12}$ && 16\% & 1\%  & 0.1s && 16\% & 1\%  & 0.3s & 0  & 1  & 48 & 0 \\
\hline
             & $10^{-4}$ && 73\% & 61\% & 0.2s && 76\% & 62\% & 1.0s & 53 & 69 & 0 & 0 \\
             & $10^{-6}$ && 62\% & 55\% & 0.2s && 65\% & 55\% & 1.0s & 2  & 102& 0 & 0 \\
\textbf{accelerometer}       & $10^{-8}$ && 49\% & 15\% & 0.2s && 52\% & 18\% & 1.0s & 2  & 33 & 69 & 0 \\
             & $10^{-10}$&& 36\% & 1\%  & 0.2s && 39\% & 1\%  & 1.0s & 2  & 0  & 102& 0 \\
             & $10^{-12}$&& 25\% & 1\%  & 0.2s && 28\% & 1\%  & 1.0s & 2  & 0  & 102& 0\\
\hline
             & $10^{-4}$ && 78\% & 66\% & 0.08s && 79\% & 68\% & 1.3s & 46 & 38 & 0 & 0 \\
             & $10^{-6}$ && 67\% & 53\% & 0.08s && 68\% & 56\% & 0.5s & 12  & 70& 2 & 0 \\
\textbf{rotation}  & $10^{-8}$ && 53\% & 29\% & 0.07s && 54\% & 29\% & 0.4s & 0  & 46 & 38 & 0 \\
             & $10^{-10}$&& 40\% & 0\%  & 0.1s && 41\% & 0\%  & 0.5s & 0  & 0  & 84& 0 \\
             & $10^{-12}$&& 29\% & 0\%  & 0.09s && 30\% & 0\%  & 0.5s & 0  & 0  & 48& 0\\
\hline
                & $10^{-4}$ && 68\% & 46\% & 1.8s && 69\% & 46\% & 10.7s & 260 & 581 & 0 & 0 \\
                & $10^{-6}$ && 57\% & 38\% & 1.8s && 58\% & 45\% & 11.0s & 258  & 580& 3 & 0 \\
\textbf{lowPassFilter}  & $10^{-8}$ && 44\% & -7\% & 2.0s && 45\% & -7\% & 11.4s & 258  & 2 & 581 & 0 \\
                & $10^{-10}$&& 31\% & -7\%  & 1.7s && 32\% & -7\%  & 10.9s & 258  & 0  & 583& 0 \\
                & $10^{-12}$&& 20\% & -7\%  & 1.8s && 21\% & -7\%  & 11.3s & 258  & 0  & 583& 0\\
\hline
                & $10^{-4}$ && 71\% & 54\% & 0.15s && 71\% & 54\% & 0.4s & 0 & 13 & 0 & 0 \\
                & $10^{-6}$ && 60\% & 50\% & 0.2s && 60\% & 50\% & 0.5s & 0  & 12& 1 & 0 \\
\textbf{Pendulum}  & $10^{-8}$ && 47\% & 0\% & 012s && 47\% & 0\% & 0.4s &0  & 0 & 13 & 0 \\
                & $10^{-10}$&& 33\% & 0\%  & 0.16s && 34\% & 0\%  & 0.5 & 0  & 0  & 13& 0 \\
                & $10^{-12}$&& 22\% & 0\%  & 0.11s && 22\% & 0\%  & 0.4s & 0  & 0  & 13& 0\\
\hline
\end{tabular}
}   \vspace{0.3cm}\caption{\label{tab1}Precision tuning results for XXX for the ILP and PI methods.}
\end{table}
In this section, we aim at evaluating the performance of our tool XXX implementing the techniques of Section~\ref{sec4}. Recall that XXX reduces the problem of precision tuning to an ILP by generating the constraints defined in Section \ref{sec41} which can be solved by a linear solver. We use GLPK\footnote{https://www.gnu.org/software/glpk/} in practice with continuous variables since we know from Section~\ref{sec42} that it is possible. Indeed, GLPK's constant \texttt{col-kind} is set to \texttt{cv} and not to \texttt{iv} as for integer problems. Alternatively, a second optimized set of constraints may be generated by applying the PI technique introduced in Section \ref{sec42}.

We have evaluated XXX on several numerical programs. Two of them were used as a benchmark for precision tuning in prior work \cite{RGNNDKSBIH13} and are coming from the GNU scientific library (GSL): \textbf{arclength} which is a program first introduced in \cite{arc} and the \textbf{simpson} program which corresponds to an implementation of the widely used Simpson{'s} rule for integration in numerical analysis \cite{simpsons}. In addition, we have experimented the techniques introduced in this article on three programs used as benchmarks for XXX in its former version \cite{KM19,KM20,KMA19}. The program \textbf{rotation} performs a matrix-vector product to rotate a vector around the $z$ axis by an angle of $\theta$ \cite{KMA19}. The second program \textbf{accelerometer} is an application to measure an inclination angle with an accelerometer \cite{KM19} and, finally, \textbf{lowPassFilter} program \cite{KM20} which is taken from a pedometer application \cite{pedometer} that counts the number of footsteps of the user. The last two programs come from the IoT field. Also, we take the \textbf{pendulum} program already introduced in Section \ref{sec2}.

 The experiments shown in Table \ref{tab1} present the results of precision tuning returned by XXX for each error threshold $10^{-4}$, $10^{-6}$, $10^{-8}$ and $10^{-10}$. The error threshold represents the $\accd$. For instance, the result is required to be correct up to 4 digits for an error threshold of $10^{-4}$. Obviously, XXX counting the number of significant bits, these threshold are converted in binary. In Table \ref{tab1}, we represent by "TH" the error threshold given by the user. "BL" is the percentage of optimization at bit level. "IEEE" denotes the percentage of optimized variables in IEEE754 formats (\texttt{binary16}, \texttt{binary32}, $\ldots$ The $\accd$ obtained at bit level is approximated by the immediately the upper number of bits corresponding to a IEEE754 format). "ILP-time" is the total time of XXX analysis in the case of ILP formulation. We have also "PI-time" to represent the time passed by XXX to find the right policy and to resolve the precision tuning problem. "H", "S", "D" and "LD" denote respectively the number of variables obtained in, half, float, double and long-double precision when using the PI formulation that clearly displays better results.

\begin{table}[tb]
\scriptsize
\begin{center}
 \begin{tabular}{|c||l|c|c|c|c|c|c|c|c|c|}
  \hline \hline
  \multirow{1}{1.7cm}{Program}&\multirow{1}{2cm}{\centerline{Tool}}  &\multicolumn{4}{c|}{\#Bits saved - Time in seconds}\\
  \cline{3-6}
  && Threshold $10^{-4}$ &Threshold $10^{-6}$ &Threshold $10^{-8}$ & Threshold $10^{-10}$ \\
 \hline
 \multicolumn{1}{|l||}{\textbf{arclength}}
  &XXX ILP (28) & \textbf{2464b.} - 1.8s.& \textbf{2144b.} - 1.5s.& \textbf{1792b.} - 1.7s. & \textbf{1728b.} - 1.8s. \\
  &XXX SMT (22)& 1488b. - 4.7s. & 1472b. - 3.04s. & 864b. - 3.09s. & 384b. - 2.9s. \\
  &Precimonious (9)& 576b. - 146.4s. & 576b. - 156.0s. & 576b. - 145.8s. & 576b. - 215.0s. \\
 \hline
  \multicolumn{1}{|l||}{\textbf{simpson}}
  &XXX ILP (14)& \textbf{1344b.} - 0.4s.& \textbf{1152b.} - 0.5s. & \textbf{896b.} - 0.4s. & \textbf{896b.} - 0.4s. \\
  &XXX SMT (11)& 896b. - 2.9s. & 896b. - 1.9s. & 704b. - 1.7s. & 704b. - 1.8s.\\
  &Precimonious (10)& 704b. - 208.1s. & 704b. - 213.7s. & 704b. - 207.5s. & 704b. - 200.3s. \\
 \hline
 \multicolumn{1}{|l||}{\textbf{rotation}}
  &XXX ILP (25)& \textbf{2624b.} - 0.47s.& 2464b. - 0.47s. & 2048b. - 0.54s. & 1600b. - 0.48s. \\
  &XXX SMT (22)& 1584b. - 1.85s. & 2208b. - 1.7s. & 1776b. - 1.6s. & 1600b. - 1.7s.\\
  &Precimonious (27)& 2400b. - 9.53s. & \textbf{2592b.} - 12.2s. & \textbf{2464b.} - 10.7s. & \textbf{2464b.} - 7.4s. \\
  \hline
  \multicolumn{1}{|l||}{\textbf{accel.}}
  &XXX ILP (18)& \textbf{1776b.} - 1.05s. & \textbf{1728b.} - 1.05s. & \textbf{1248b.} - 1.04s. & \textbf{1152b.} - 1.03s. \\
  &XXX SMT (15)& 1488b. - 2.6s. & 1440b. - 2.6s. & 1056 - 2.4s. & 960b. - 2.4s. \\
  &Precimonious (0)& - & - & - &-  \\
  \hline
  \hline
  \end{tabular}
  \end{center}
  \caption{\label{tabcomp}Comparison between XXX ILP, XXX SMT and Precimonious: number of bits saved
  by the tool and time in seconds for analyzing the programs.}
\end{table}

Let us focus on the first "TH", "BL", "IEEE" and "ILP-time" columns of Table \ref{tab1}. These columns have been measured for the case of ILP by being pessimistic on the propagation of carry bit as described in Section \ref{sec41}. We recall that all variables of our programs are in double precision before analysis. For the \textbf{arclength} program, the final percentage of bits after optimization reaches $61\%$ at bit level while it achieves $43\%$ in IEEE formats ($100\%$ is the percentage of all variables initially in double precision, $121$ variables for the original \textbf{arclength} program that used $7744$ bits). This is obtained in only $0.9$ second by applying the ILP formulation. As another option, when we refine the solution by applying the policy iteration method (from the sixth column), XXX displays better results attaining $62\%$ at bit level and $43\%$ for the IEEE formats. Although XXX needs more time of analysis to find and iterate between policies, the time of analysis remain negligible, not exceeding $1.5$ seconds. For a total of $121$ variables for the \textbf{arclength} original program, XXX succeeds in tuning $8$ variables to half precision (H), the majority with $88$ variables passes to simple precision (S) whereas $25$ variables remain in double precision (D) for an error threshold of $10^{-4}$. We remark that our second method displays better results also for the other user error thresholds. For the \textbf{simpson}, \textbf{accelerometer}, \textbf{rotation} and \textbf{lowPassFilter}, the improvement is also more important when using the PI technique than when using the ILP formulation. For instance, for an error threshold of $10^{-6}$ for the \textbf{simpson} program, only one variable passes to half precision, $27$ variables turns to simple precision while $21$ remains in double precision with $56\%$ of percentage of total number of bits at bit level using the policy iteration method. Concerning the \textbf{pendulum} code, the two techniques return the same percentage at bit level and IEEE754 format for the majority of error thresholds (except $10^{-10}$ where XXX reaches $34\%$ at bit level when using the PI method).

Now, we stress on the negative percentage that we obtain in Table \ref{tab1}, especially for the \textbf{arclength} program with $10^{-10}$ and $10^{-12}$ for the columns IEEE and for the \textbf{lowPassFilter} program for errors of $10^{-8}$, $10^{-10}$ and $10^{-12}$. In fact,
XXX being able to return new formats for any precision required by the user without additional cost nor by increasing the
complexity even if it fails to have a significant improvement on the program output. To be specific, taking again the \textbf{arclength} program, for an error of $10^{-12}$, XXX fulfills this requirement by informing the user that this precision is achievable only if 10 variables passes to the long double precision (LD) which is more than the original program whose variables are all in double precision. By doing so, the percentage of IEEE formats for both ILP and PI formulations reaches $-17\%$ and $-8\%$, respectively. Same reasoning is adopted for the \textbf{lowPassFilter} which spends more time, nearly $12$ seconds, with the policy iteration technique to find the optimized formats (total of 841 variables). Note that in these cases, other tools like Precimonious \cite{RGNNDKSBIH13} fail to propose formats.

Table~\ref{tabcomp} shows a comparison between the new version of XXX combining both ILP and PI formulations, the former version of XXX that uses the Z3 SMT solver coupled to binary search to find optimal solution \cite{KMA19} and the prior state-of the-art Precimonious \cite{RGNNDKSBIH13}. The results of the mixed precision tuning are shown for the \textbf{arclength}, \textbf{simpson}, \textbf{rotation} and \textbf{accelerometer} programs. Since XXX and Precimonious implement two different techniques, we have adjusted the criteria of comparison in several points. First, we mention that XXX optimizes much more variables than Precimonious. While it disadvantages XXX, we only consider in the experiments of Table~\ref{tabcomp} the variables optimized by Precimonious to estimate the quality of the optimization. Second, let us note that the error thresholds are expressed
in base $2$ in XXX and in base $10$ in Precimonious. For the relevance of comparisons, all the thresholds
are expressed in base $10$ in tables \ref{tab1} and \ref{tabcomp}. In practice, XXX will use the base $2$ threshold immediately lower than the required base $10$ threshold.

In Table \ref{tabcomp}, we indicate in bold the tool that exhibits better results for each error threshold and each program. Starting with the \textbf{arclength} program, XXX ILP displays better results than the other tools by optimizing $28$ variables. For an error threshold of $10^{-4}$, $2464$ bits are saved by XXX ILP in $1.8$ seconds while XXX Z3 saved only 1488 bits in more time ($11$ seconds). Precimonious were the slowest tool on this example, more than $2$ minutes with $576$ bits for only $9$ variables optimized. For the \textbf{simpson} program, XXX ILP do also better than both other tools. However, for the \textbf{rotation} program, XXX ILP saves more bits than the other tools only for an error of $10^{-4}$  while Precimonious do well for this program for the rest of error thresholds. Finally, Precimonious fails to tune the \textbf{accelerometer} program (0 variables) at the time that XXX ILP do faster (only 1 second) to save much more bits than XXX SMT for the four given error thresholds. 
\section{Conclusion and Perspectives} \label{sec7}
In this article, we have introduced a new technique for precision tuning, totally different from
the existing ones. Instead of changing more or less randomly the data types of the numerical variables and
running the programs to see what happens, we propose a semantical modelling of the propagation of the
numerical errors throughout the code. This yields a system of constraints whose minimal solution
gives the best tuning of the program, furthermore, in polynomial time. Two variants of this system are proposed. The first
one corresponds to a pure ILP. The second one, which optimizes the propagation of carries in the elementary
operations can be solved using policy iterations \cite{CGGMP05}. Proofs of correctness concerning the soundness of the
analysis and the integer nature of the solutions have been presented in Section \ref{sec5} and experimental
results showing the efficiency of our method have been introduced in Section \ref{sec6}.

Compared to other approaches, the strength of our method is to find directly the minimal number of bits
needed at each control point to get a certain accuracy on the results. Consequently, it is not dependant of
a certain number of data types (e.g. the IEEE754 formats) and its complexity does not increase as the
number of data types increases.
The information provided may also be used to generate computations in the fixpoint arithmetic with
an accuracy guaranty on the results. Concerning scalability, we generate a linear number of constraints and variables in the size of the analyzed program. The only limitation is the size of the problem accepted by the solver.
Note that the number of variables could be reduced by assigning the same precision to a whole piece of code (for example an arithmetic expression, a line of code, a function, etc.) Code synthesis for the fixpoint arithmetic
and assigning the same precision to pieces of code are perspectives we aim at explore at short term.

At longer term, other developments of the present work are planned. First we wish to adapt the
techniques developed in this article to the special case of Deep Neural Networks for which it is
important to save memory usage and computational resources. Second, we aim at  using our
precision tuning method to guide lossy compression techniques for floating-point datasets \cite{DFHS19}. In this
case, the bit-level accuracy inferred by our method would determine the compression rate of the
lossy technique. 
\bibliographystyle{splnc}
\bibliography{bitlevel}

\newpage

\section{Appendix} \label{sec8}
We need a lemma on some algebraic operations stable on the set of functions written as the min-max of a finite family of affine functions. The functions are defined on $\rr^d$.
\begin{lemma}
\label{lemmaapx}
The following statements hold:
\begin{itemize}
\item The sum of two min-max of a finite family of affine functions is a min-max of a finite family of affine functions.
\item The maximum of two min-max of a finite family of affine functions is a min-max of a finite family of affine functions.
\end{itemize}
\end{lemma}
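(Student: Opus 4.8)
The plan is to reduce both statements to three elementary facts about the lattice operations $\min$, $\max$ and pointwise addition. First I would single out the intermediate class of \emph{max-affine} functions, i.e. functions of the form $\max_{j\in J} a_j$ with each $a_j$ affine and $J$ finite. This class is closed under $\max$ (take the union of the two index families) and under addition, since $\max_{i\in I} a_i + \max_{j\in J} b_j = \max_{(i,j)\in I\times J}(a_i+b_j)$ and $a_i+b_j$ is again affine. A min-max of a finite family of affine functions is then exactly a finite $\min$ of max-affine functions, $f=\min_{i\in I} g_i$ with each $g_i$ max-affine, and this is the representation I would work with throughout.

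For the second (maximum) statement, let $f=\min_{i\in I} g_i$ and $f'=\min_{k\in K} h_k$ with $g_i,h_k$ max-affine. I would apply the lattice distributive law $\max(u,\min_k w_k)=\min_k \max(u,w_k)$ twice to push the outer maximum inside both minima, obtaining $\max(f,f')=\min_{(i,k)\in I\times K}\max(g_i,h_k)$. Since $\max(g_i,h_k)$ is a maximum of two max-affine functions, it is itself max-affine, so $\max(f,f')$ is a finite $\min$ of max-affine functions, hence of the required form.

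For the first (sum) statement, I would instead use that addition distributes over $\min$: $\min_{i\in I} g_i + \min_{k\in K} h_k = \min_{(i,k)\in I\times K}(g_i+h_k)$. Each summand $g_i+h_k$ is a sum of two max-affine functions, hence max-affine by the closure noted above, so $f+f'$ is again a finite $\min$ of max-affine functions. In both cases the index set stays finite, being a product of two finite sets.

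The only genuinely delicate point is to justify the distributive identities at the level of finite families rather than for two arguments only; I would derive the finite forms from the binary lattice laws $\max(u,\min(v,w))=\min(\max(u,v),\max(u,w))$ and $u+\min(v,w)=\min(u+v,u+w)$ by a routine induction on the cardinality of the index sets. Everything else — the closure of max-affine functions under $\max$ and $+$, and the finiteness bookkeeping — is immediate, so I do not expect any real obstacle beyond this bookkeeping.
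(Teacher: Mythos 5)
Your proof is correct and takes essentially the same route as the paper's: both arguments arrive at the product-index representation $f+f'=\min_{(i,k)}\max_{(j,l)}\bigl(g^{ij}+h^{kl}\bigr)$ (and its analogue for the maximum), the paper verifying the underlying distributivity of $+$ and $\max$ over $\min$ pointwise by choosing optimal indices $i,k$ and proving the two inequalities, while you package the same identities as lattice distributive laws routed through the intermediate class of max-affine functions. The one delicate point you flag --- extending the binary distributive laws to finite families by induction --- is precisely the content of the paper's pointwise two-inequality argument, so there is no gap.
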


\begin{proof}
Let $g$ and $h$ be two min-max of a finite family of affine functions and $f=g+h$. We have $g=\min_i\max_j g^{ij}$ and $h=\min_k\max_l h^{kl}$. Let $x\in \rr^d$. There exist $i,k$ such that $f(x)\geq \max_j g^{ij}(x)+\max_l h^{kl}(x)=\max_{j,l} g^{ij}(x)+h^{kl}(x)$. We have also, for all $i,k$, $f(x)\leq \max_j g^{ij}(x)+\max_l h^{kl}(x)=\max_{j,l} g^{ij}(x)+h^{kl}(x)$. We conclude that $f(x)=\min_{i,k}\max_{j,l} g^{ij}(x)+h^{kl}(x)$ for all $x$.
We use the same argument for the max. \hfill$\Box$
\end{proof}

\begin{proposition}
The following results hold:
\begin{enumerate}
\item Let $\xi$ the constant function equal to 1. The system $S_\xi^1$ can be rewritten as $\{\accd\in\nn^{Lab}\mid F(\accd)\leq (\accd)\}$ where $F$ maps $\rr^{Lab}\times \rr^{Lab}$ to itself, $F(\nn^{Lab}\times \nn^{Lab})\subseteq (\nn^{Lab}\times \nn^{Lab})$ and has coordinates which are the maximum of a finite family of affine order-preserving functions.
\item Let $\xi$ the function such that $\xi(\ell)$ equals the function defined at Fig.~\ref{ip}. The system $S_\xi$ can be rewritten as $\{(\accd,\acce)\in\nn^{Lab}\times \nn^{Lab}\mid F(\accd,\acce)\leq (\accd,\acce) \}$ where $F$ maps $\rr^{Lab}\times \rr^{Lab}$ to itself, $F(\nn^{Lab}\times \nn^{Lab})\subseteq (\nn^{Lab}\times \nn^{Lab})$ and all its coordinates are the min-max of a finite family of affine functions.
\end{enumerate}
\end{proposition}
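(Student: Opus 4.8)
The plan is to rewrite each constraint system in the \emph{fixpoint normal form} $F(\cdot)\leq\cdot$ by observing that every inequality lower-bounds a \emph{single} coordinate, grouping, for each coordinate, the finitely many inequalities that constrain it, and replacing their conjunction by a single inequality whose right-hand side is the maximum of the individual right-hand sides. Once this is done, the structural claims (maximum of affine functions, respectively min-max of affine functions) reduce to the closure properties of min-max functions recorded in Lemma~\ref{lemmaapx}, so the genuinely mathematical content is entirely delegated to that lemma and what remains is bookkeeping.

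For the first statement I would first check that, with $\xi\equiv 1$, every rule of Figure~\ref{ilp} produces inequalities of the shape ``one coordinate $\accd(\ell)$ $\geq$ affine expression'': rules (\textsc{Id}), (\textsc{Assign}), (\textsc{Sqrt}), (\textsc{Math}), (\textsc{Cond}), (\textsc{While}) and (\textsc{Req}) bound a single $\accd(\cdot)$ by $\accd(\ell')+c$ with $c\in\mathbb Z$ (an $\ufp$ difference, the loss $\varphi$, or the required precision $\texttt p$), and the binary rules (\textsc{Add})--(\textsc{Div}) do the same since $\xi$ contributes only the integer constant $1$. Fixing a coordinate $\ell$ and letting $g_1,\dots,g_m$ be the affine right-hand sides bounding it, I set $F_\ell:=\max(0,g_1,\dots,g_m)$ (the extra $0$ ensures $F$ maps $[0,+\infty)^{Lab}$ into itself and covers coordinates with no bounding constraint). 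Each $g_i$ has a single variable with coefficient $+1$, hence is affine, order-preserving and integer-valued; a finite maximum of such functions inherits all three properties, which is exactly the asserted form together with $F(\nn^{Lab})\subseteq\nn^{Lab}$. The second $\rr^{Lab}$ factor in the statement carries the $\acce$ block, on which $F$ acts trivially for this first system.

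For the second statement the argument has the same shape, the only new ingredient being the optimized $\xi$. I would read off directly from the formula at the foot of Figure~\ref{ip} that $\xi(\ell)(\ell_1,\ell_2)$ is the $\min$ of two $\max$'s of affine functions of $\accd,\acce$ together with the constant $1$, i.e. a min-max of a finite family of affine functions. Since an affine function is a degenerate min-max, and since by Lemma~\ref{lemmaapx} the sum of two min-max functions is again min-max, every right-hand side occurring in (\textsc{Add})--(\textsc{Div}) of Figure~\ref{ilp} and in the rules of Figure~\ref{ip} --- each of the form (affine) or (affine $+\,\xi$) --- is a min-max of affine functions. Grouping once more all inequalities bounding a given coordinate $\accd(\ell)$ or $\acce(\ell)$ and taking the maximum (again throwing in $0$), the second clause of Lemma~\ref{lemmaapx} (the max of two min-max functions is min-max) shows that each coordinate of $F$ is a min-max of affine functions. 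Integrality of all coefficients and constants (the $\ufp$ values, $\varphi$, and the literals $0,1,2$) yields $F(\nn^{Lab}\times\nn^{Lab})\subseteq\nn^{Lab}\times\nn^{Lab}$.

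I expect the only delicate points to be, first, verifying that the rules never bound the same coordinate by two \emph{incompatible} forms, so that each coordinate genuinely collapses to a single maximum, and, second, the handling of degenerate cases: coordinates carrying no lower bound (absorbed by the constant $0$) and rule (\textsc{Math}$'$), whose right-hand side is $+\infty$. I would accommodate the latter by working in the complete lattice $[0,+\infty]^{Lab}\times[0,+\infty]^{Lab}$, where $+\infty$ is a legitimate constant affine value; this also aligns the framework with the hypotheses of Proposition~\ref{propLP}. Beyond these routine checks, no obstacle remains once Lemma~\ref{lemmaapx} is available.
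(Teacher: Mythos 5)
Your proposal is correct and follows essentially the same route as the paper's proof: collapse the finitely many lower bounds on each coordinate into a single $\max$, then invoke Lemma~\ref{lemmaapx} for closure of min-max functions under sums (to absorb $\xi$) and maxima; the paper merely organizes this grouping as an explicit induction over the rules with a bookkeeping set $L$ of already-constructed coordinates, whereas you group in one shot after generation. If anything, you are more careful than the paper on the degenerate cases --- padding unconstrained coordinates with the constant $0$ and handling the $+\infty$ right-hand side of rule (\textsc{Math}$'$) in $[0,+\infty]$ --- which the paper's proof passes over in silence.
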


\begin{proof}
We only give details about the system $S_\xi^1$ (Figure~\ref{ilp}).
By induction on the rules. We write $L=\{\ell\in Lab\mid F_\ell \text{ is  constructed }\}$. This set is used in the proof to construct $F$ inductively.

For the rule (CONST), there is nothing to do. For the rule (ID), if the label $\ell'=\rho(x)\in L$ then we define $F_{\ell'}(\accd)=\max(F_{\ell'}(\accd),\accd(\ell))$. Otherwise  $F_{\ell'}(\accd)=\accd(\ell)$. As $\accd\mapsto \accd(\ell)$ is order-preseving and the maximum of one affine function, $F_{\ell'}$ is the maximum of a finite family of order-preserving affine functions since $\max$ preserves order-preservation.

For the rules (ADD), (SUB), (MULT), (DIV), (MATH) and (ASSIGN), by induction, it suffices to focus on the new set of inequalities. If $\ell_1\in L$, we define $F_{\ell_1}$ as the max with old definition and $RHS(\accd)$ i.e. $F_{\ell_1}(\accd)=\max(RHS(\accd),F_{\ell_1}(\accd))$ where $RHS(\accd)$ is the right-hand side part of the new inequality. If $\ell_1\notin L$, we define $F_{\ell_1}(\accd)=RHS(\accd)$. In the latter rules, $RHS(\accd)$ are order-preserving affine functions. It follows that $F_{\ell_1}$ is the maximum of a finite family of order-preserving affine functions.

The result follows by induction for the rule (SEQ).

The rules (COND) and (WHILE) are treated as the rules (ADD), (SUB), (MULT), (DIV), (MATH) and (ASSIGN), by induction and the consideration of the new set of inequalities.

The last rule (REQ) constructs $F_{\rho(x)}$ either as the constant function equal to $p$ at label $\rho(x)$ or the maximum of the old definition of $F_{\rho(x)}$ and $p$ if $\rho(x)\in L$.

The proof for the system  $S_\xi$ uses the same arguments and Lemma~\ref{lemmaapx}. \hfill$\Box$
\end{proof}
\end{document}